\documentclass[11pt]{article}
\usepackage[letterpaper, margin=1in]{geometry}
\usepackage{times}
\usepackage[T1]{fontenc}
\usepackage{amsmath, amsthm, amssymb, mathtools}
\usepackage{thm-restate}
\usepackage[colorlinks, citecolor=blue, linkcolor=blue, urlcolor=blue]{hyperref}
\usepackage[capitalise]{cleveref}
\usepackage{microtype}

\hypersetup{
    pdftitle={Faster Deterministic Approximation for Counting Knapsack},
    pdfsubject={Deterministic approximation algorithm for counting knapsack}
}

\theoremstyle{plain}
\newtheorem{theorem}{Theorem}[section]
\newtheorem{lemma}[theorem]{Lemma}
\newtheorem{claim}[theorem]{Claim}
\newtheorem{fact}[theorem]{Fact}
\theoremstyle{definition}
\newtheorem{definition}[theorem]{Definition}

\newcommand{\eps}{\varepsilon}
\renewcommand{\Pr}{\operatorname*{\mathbf{Pr}}}
\newcommand{\poly}{\operatorname{\mathrm{poly}}}
\newcommand{\polylog}{\poly\log}
\newcommand{\cO}{\mathcal{O}}
\newcommand{\tcO}{\widetilde{\mathcal{O}}}
\newcommand{\supp}{\mathrm{supp}}
\newcommand{\numberthis}{\addtocounter{equation}{1}\tag{\theequation}}

\title{Faster Deterministic Approximation for Counting Knapsack}
\author{
    Ce Jin\thanks{University of California Berkeley. \textbf{\texttt{cejin@berkeley.edu}}.
        Supported by the Miller Research Fellowship at the Miller Institute for Basic Research
        in Science, UC Berkeley.}
    \and
    Jaewoo Park\thanks{University of California San Diego. \textbf{\texttt{pjaewoo@ucsd.edu}}.
        Supported by NSF HDR TRIPODS Phase II grant 2217058 (EnCORE Institute).}
    \and
    Barna Saha\thanks{University of California San Diego. \textbf{\texttt{bsaha@ucsd.edu}}.
        Supported by NSF HDR TRIPODS Phase II grant 2217058 (EnCORE Institute).}
}
\date{\vspace{-5ex}}

\begin{document}
\maketitle

\begin{abstract}
    Given $n$ nonnegative integer weights $w_1, \ldots, w_n$ and an integer knapsack capacity $T$,
the $\#$Knapsack problem asks to count the number of subsets $I \subseteq \{1, 2, \dots, n\}$
such that $\sum_{i \in I} w_i \le T$.
A long line of research has been devoted to designing faster
$(1 \pm \varepsilon)$-approximation algorithms for this \#P-hard problem.
Until very recently, the best-known time complexity was
$\widetilde{\mathcal{O}}\bigl(n^{2.5}/\mathrm{poly}(\varepsilon)\bigr)$, achieved by the
randomized algorithm of [Dyer, STOC 2003] and the deterministic algorithm of [Gawrychowski,
Markin, and Weimann, ICALP 2018].
The recent work by [Feng and Jin, SODA 2025] improved the randomized complexity to
$\widetilde{\mathcal{O}}\bigl(n^{1.5}/\mathrm{poly}(\varepsilon)\bigr)$, leaving open the
question whether this $n^{2.5}$ barrier could also be broken by a \emph{deterministic}
algorithm.

In this work, we answer this question in the affirmative by presenting a deterministic
$(1 + \varepsilon)$-approximation algorithm for \#Knapsack in
$\widetilde{\mathcal{O}}\bigl(n^{2.4}/\mathrm{poly}(\varepsilon)\bigr)$ time.
Our key technical innovation is to deterministically simulate the balancing technique,
which was previously used in the literature of knapsack-type problems via random
partitions and seemed to require randomization.
\end{abstract}

\newpage
\section{Introduction}
\label{sec:introduction}
We consider one of the most basic counting problems, namely approximately counting the number of
solutions of the $0/1$ knapsack problem (\#Knapsack). The \#Knapsack problem is defined as follows:
given $n$ nonnegative integer weights $w_1, \ldots, w_n$ and an integer knapsack capacity $T$,
the goal is to count the number of subsets whose total weight is at most $T$; that is, to compute
\[
    \Bigl\lvert \Bigl\{\, I \subseteq \{1,2,\ldots,n\} : \sum_{i \in I} w_i \le T \,\Bigr\} \Bigr\rvert.
\]
The \#Knapsack problem is equivalent to the task of counting the vertices of the \(n\)-dimensional
Boolean hypercube that lie in a given halfspace. This geometric perspective reveals its connections
to multiple other areas in theoretical computer science, including learning
theory~\cite{KKMS08, OS11, DDFS14}, property testing~\cite{MORS10},
pseudorandomness~\cite{DGJSV10, RS10}, and circuit complexity~\cite{PS90, HG91, Sherstov09, KW16}.

A related decision problem, Subset Sum\footnote{ The Subset Sum problem asks whether there exists
$I\subseteq[n]$ such that $\sum_{i\in I}w_i=T$.} is well known to be NP-hard, but admits a
pseudo-polynomial $\poly(nT)$-time algorithm based on standard dynamic programming.
The counting problem \#Knapsack is known to be \#P-complete. However, the same dynamic programming 
approach can be adapted to solve it exactly in pseudo-polynomial $\poly(nT)$ time.

Given the fundamental nature of this problem and the hardness of \#Knapsack, a long line of work
has focused on designing
$(1\pm\eps)$-approximation algorithms for \#Knapsack with running time polynomial in $n$ and $1/\eps$,
where $\eps>0$ is the
error parameter. These works include both randomized algorithms
(fully polynomial randomized approximation schemes, FPRASes) and deterministic algorithms
(fully polynomial-time approximation schemes, FPTASes).

Early \emph{randomized} approximation algorithms for \#Knapsack used Markov chain Monte Carlo (MCMC) sampling
together with the sampling-to-counting reduction of Jerrum et al.~\cite{JERRUM86}.
Dyer et al.~\cite{DFKKPV93} analyzed a random walk for sampling feasible subsets almost uniformly,
yielding a $2^{\cO(\sqrt{n}(\log n)^{2.5})}/\eps^2$-time approximation algorithm for \#Knapsack.
Morris and Sinclair~\cite{MS99} later improved the mixing
time of this random walk, yielding the first FPRAS for
\#Knapsack with running time
$\cO(n^{4.5+\eta}+n^2/\eps^2)$,
for any fixed constant $\eta>0$.
In 2003, Dyer~\cite{Dyer03} improved the FPRAS running
time to $\tcO(n^{2.5}+n^2/\eps^2)$\footnote{$\tcO$ hides $\polylog(n/\eps)$ factors.} using a completely
different method. For constant $\eps$, this remained
the best-known running time for over twenty years. Instead
of using Markov chains, he used textbook dynamic programming on discretized capacities via
randomized weight rounding, as well as the sampling-to-counting reduction. Very recently, Feng and
Jin~\cite{FJ25} improved Dyer's
approach to obtain a running time of $\tcO(n^{1.5}\eps^{-2})$, achieving the first subquadratic dependence on $n$.

Deterministic approximation schemes for \#P-complete problems are relatively rare, counting knapsack
solutions being one of those rare instances. Gopalan et al.~\cite{GKM10} and Štefankovič et
al.~\cite{SVV12} (see also their combined work~\cite{GKRSVV11}) were the first to give an FPTAS for
\#Knapsack, with a running time of $\cO(n^{3}\eps^{-1} \log(n/\eps))$. Their methods are based on
dynamic programming: Gopalan et al.\ count the number of accepting paths of the small-width
read-once branching programs they construct, while \v{S}tefankovi\v{c} et al.\ approximate a
``dual'' dynamic program that computes the minimum capacity required to obtain at least a given number
of feasible subsets. Rizzi and Tomescu~\cite{RT14} later improved the running time by a logarithmic factor 
using floating-point arithmetic to approximate the counting values in the dynamic programming table.
Halman~\cite{Halman16} subsequently gave a simpler FPTAS for the more general integer knapsack
counting problem, with running time $\tcO(n^3/\eps)$ for the 0/1 case. Finally, Gawrychowski, Markin, and
Weimann~\cite{GMW18} obtained the previously fastest deterministic FPTAS for \#Knapsack, with running time
$\tcO(n^{2.5}\eps^{-1.5})$, using divide-and-conquer and fast convolution of sparsified counting functions.

These developments raise the following question:
\begin{center}
    {\bf Question:} {\it Does there exist an FPTAS for \#Knapsack with a running time of
    $\cO(n^{2.5 - \delta}/\poly(\eps))$ for some constant $\delta > 0$?}
\end{center}

\paragraph*{Our Results.}
In this paper, we present an improved deterministic FPTAS
for \#Knapsack that breaks the $n^{2.5}$ barrier,
answering the above question affirmatively. Compared to the previous state-of-the-art, 
which runs in $\tcO(n^{2.5}\eps^{-1.5})$ time, our algorithm improves
both the $n$-dependence and the $\eps$-dependence by
polynomial factors.
\begin{restatable}[]{theorem}{fptas}
    \label{thm:count-knapsack-fptas}
    There exists an FPTAS for \#Knapsack that runs in $\tcO(n^{2.4}/\eps^{1.4})$ time.
\end{restatable}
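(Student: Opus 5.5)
We follow the framework of Gawrychowski, Markin, and Weimann~\cite{GMW18}: represent the approximate counting function of a set of items by a monotone step function in the ``dual'' form of \v{S}tefankovi\v{c} et al.~\cite{SVV12}. For a multiset $S$ of items, let $\tau_S(a)$ be the minimum capacity $c$ such that at least $(1+\eps')^a$ subsets of $S$ have weight at most $c$, for $a=0,1,\dots,\cO(n/\eps')$. Take $\eps'=\eps/\polylog n$, so that errors compounding over a recursion of depth $\cO(\log n)$ stay within $1\pm\eps$. Merging two disjoint item sets becomes a $(\max,\min)$-type convolution of these monotone sequences: $\tau_{A\cup B}(a)\approx\min_{a_1+a_2\approx a}\max(\tau_A(a_1),\tau_B(a_2))$, with a multiplicative error of $(1+\eps')^{\cO(1)}$ per merge. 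Naively this costs $\tcO(n^2/\eps^2)$ per merge and $\tcO(n^3)$ in total. GMW obtain $n^{2.5}$ by exploiting that the convolved sequences are sparse or concave-like in a suitable sense.

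The improvement comes from balancing. Suppose the items are split into $k$ groups so that for every relevant target the optimal decomposition $a=a_1+\dots+a_k$ has every $a_j$ within $\Delta\approx (n/k)^{1/2}\cdot\poly(1/\eps)$ of its mean share. Then each merge in a balanced binary tree of groups only needs to consider offsets in a window of width $\tcO(\Delta)$ around the diagonal, rather than all pairs. This replaces the quadratic merge by a banded merge of cost roughly $(\text{length})\cdot\Delta$. Randomized algorithms obtain this concentration from a random partition. Our plan is to simulate it deterministically. Sort the items by weight and assign them to the $k$ groups in round-robin fashion (item $i$ goes to group $i \bmod k$). Each group is then a ``stratified sample'' of the full weight profile, and the counting functions of any two groups are close after a shift. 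We prove this by comparing the groups item by item through a monotone coupling: the $j$-th item of group $g$ is at most the $j$-th item of group $g+1$, which is at most the $(j+1)$-th item of group $g$. As a result, the $\tau$ functions of different groups sandwich one another up to a one-item shift, which costs at most a factor of $2$ in the count, i.e.\ an $\cO(1/\eps')$ shift in the index $a$. A Chernoff-free exchange argument then shows that an optimal split between groups can be rebalanced with only an additive $\cO(1/\eps')$ loss per group. So the window width becomes $\Delta=\cO(k/\eps')$-ish, or a hybrid bound.

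For the running time, we do the merges inside each group with the GMW procedure, at cost $\tcO((n/k)^{2.5}\poly(1/\eps))$ per group and $k\cdot(n/k)^{2.5}$ in total. We then combine the $k$ groups along a balanced tree using banded merges, each costing $\tcO((n/\eps)\cdot\Delta)$ per level summed over the nodes. Choosing $k$ to balance $n^{2.5}k^{-1.5}$ against the banded-merge cost (including a term where the band width grows with $n/k$) should yield $k\approx n^{0.2}$ and a total of $\tcO(n^{2.4}/\eps^{1.4})$, with the $\eps$ exponent coming from the same balancing. Finally, the answer is the largest $a$ with $\tau_{[n]}(a)\le T$, which gives a $(1+\eps)$-approximation.

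The main obstacle is the deterministic balancing lemma: proving that round-robin (or a similar sorted interleaving) groups admit near-diagonal optimal splits for all targets simultaneously, with a band width small enough to give the $n^{2.4}$ exponent. We will first try the monotone-coupling/exchange argument on the exact counting functions, and then transfer it to the approximate $\tau$ representations, tracking the error carefully.
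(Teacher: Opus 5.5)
\textbf{The deterministic balancing lemma is missing, and your sketched route to it would fail.} Your plan has the right shape. Your round-robin groups are exactly the paper's leaves: recursive parity splitting of the sorted order produces the residue classes modulo $p$. A band in your log-count index $a$ corresponds to the paper's useful-pair condition $\widetilde f_e^{\le}(x)\le 6|G|^C\widetilde f_e^{\le}(y)$. But the balancing lemma is the whole content of the improvement, and you leave it open.

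Your monotone coupling does give $\tfrac12 f_g^{\le}(t)\le f_{g+1}^{\le}(t)\le f_g^{\le}(t)$. However, closeness of the children's prefix counts says nothing about how the mass $f_{A\cup B}^{\le}(t)=\sum_x f_A(x)f_B^{\le}(t-x)$ is spread over the split point $x$. That spread is what must be controlled: for every $t$, pairs outside the band may carry only a $1/\poly(|G|)$ fraction of this sum.

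In the correct $(+,\times)$ setting there is no single ``optimal split'' to exchange. No Chernoff-free argument can give width $O(k/\varepsilon')$ either. Already with equal weights, the number of selected items on each side of a uniformly random feasible set is hypergeometric with $\Theta(\sqrt m)$ fluctuations. For, say, $t$ near a quarter of the total weight, this is $\Theta(\sqrt m/\varepsilon')$ in your index.

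The paper's proof is genuinely probabilistic:
\begin{itemize}
\item It measures imbalance by the number $C=\lceil 32c\sqrt{m\log m}+2\rceil$ of heaviest items that must be deleted from the heavier side.
\item It rounds pairs to $\widetilde w_{2k-1}=\widetilde w_{2k}=w_{2k}$, keeping the heaviest selected item unrounded.
\item It maps feasible sets to rounded-feasible sets by deleting the heaviest item, with at most $m+1$ preimages.
\item It shows that independent swaps inside pairs preserve uniformity on rounded-feasible sets. This makes Hoeffding (light items below a threshold $h$) and Chernoff (enough of the at most $C$ heavy items land on each side) applicable.
\item The excluded convolution mass is then charged to unbalanced sets via a deletion map with at most $|G|^C$ preimages.
\end{itemize}

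\textbf{The surrounding framework would also block the bound, even given such a lemma.}

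First, the merge rule is wrong. The merge $\min_{a_1+a_2\approx a}\max(\tau_A(a_1),\tau_B(a_2))$ is invalid, since two sets of weight at most $c$ combine to weight up to $2c$. The $(\min,+)$ variant recovers the count only up to the polynomial spread of the split, not $(1+\varepsilon')^{O(1)}$. You need the $(+,\times)$ convolution of sum-approximations.

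Second, the error accounting is wrong. Errors multiply across siblings, $1+\varepsilon_\ell=(1+\delta_\ell)(1+\varepsilon_{\ell+1})^2$, so they accumulate over all nodes, not over the depth. Hence $\varepsilon'=\varepsilon/\polylog n$ does not suffice: leaves need precision about $\varepsilon/p$, and the paper takes $\delta_\ell=\Theta(\varepsilon/(p^{1/2}2^{\ell/2}))$.

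Third, the decomposition gains nothing. With precision $\approx\varepsilon/k$ per group, running GMW inside each of your $k$ groups costs $k\cdot(n/k)^{2.5}(k/\varepsilon)^{1.5}=n^{2.5}/\varepsilon^{1.5}$. The paper instead uses Halman's cubic algorithm only at the leaves, for $O(n^3/(\varepsilon p))$ total. It uses band-restricted merges at every internal level, for $\widetilde{O}(n^{1.5}p^{1.5}/\varepsilon^2)$ total, and sets $p\approx n^{3/5}\varepsilon^{2/5}$.

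Finally, the target exponent is not derived from your plan. Your $k\approx n^{0.2}$ with leaf cost $n^{2.5}k^{-1.5}$ would give $n^{2.2}$, not $n^{2.4}/\varepsilon^{1.4}$.
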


We obtain our new result by using the \emph{balancing technique} to improve the previous framework
of Gawrychowski, Markin, and Weimann \cite{GMW18}. The balancing technique has been widely used in
the literature of knapsack-type problems. However, previous applications of this technique were
inherently randomized, as they crucially rely on random partitions and concentration inequalities,
e.g., \cite{ BateniHSS18,BringmannC23,HeX24,BringmannDPesa24,ChenLMZfocs24}. Our key technical
innovation is to deterministically simulate the random partition in the context of approximating
\#Knapsack. We give a more detailed overview below.

\subsection{Technical Overview}

\paragraph*{Prior works on FPTAS for \#Knapsack.}
It is well known that \#Knapsack with input integer
weights $w_1,\ldots,w_n$ and capacity $T$ can be solved
exactly in pseudopolynomial $\poly(nT)$ time by a
standard dynamic program that maintains the
\emph{count function}
\[
    f_G(t):=
    \bigl|\{I\subseteq G:\sum_{i\in I}w_i=t\}\bigr|
\]
for $G=[i]$\footnote{Throughout, denote $[i] = \{1, 2, \dots, i\}$.}, where $0\le i\le n$, represented as a
list indexed by $0\le t\le T$.

To avoid the prohibitive dependence on $T$ of this dynamic program, previous FPTASes for \#Knapsack
\cite{GKRSVV11,RT14,Halman16,GMW18} used approximation techniques to compress the count functions 
while allowing a $(1+\eps)$ approximation factor. Here, the right notion is
\emph{$(1 + \eps)$-sum-approximation} (using the terminology of \cite{GMW18}; see
\cref{def:sum-approximation} for details), namely that the prefix sum
$f_G^{\le}(t) \coloneqq \sum_{t' \le t} f_G(t')$ of $f_G$ should be within a factor of $(1 + \eps)$
of the prefix sum $\widetilde f_G^{\le}(t)$ of the sparsified function $\widetilde f_G$ for all $t$.
In view of the simple bound $f_G^{\le}(t) \le 2^{|G|}$, one can find such a
$(1 + \eps)$-sum-approximation $\widetilde f_G$ which only has
$\log_{1 + \eps}2^{|G|} = \tcO(|G|/\eps)$ non-zero entries, remarkably independent of the knapsack
capacity $T$.

Using sum-approximations, Gawrychowski, Markin, and
Weimann~\cite{GMW18} gave a simplified version of
Halman's algorithm~\cite{Halman16} that runs in
$\cO(n^3/\eps)$ time. They further improved the dependence on $n$ to
$\tcO(n^{2.5}/\eps^{1.5})$. At a high level, they replaced
the simple dynamic programming (which adds one item into consideration each time) by a
divide-and-conquer approach. For disjoint subsets of input items $G_1, G_2$, the count function of
$G_1 \cup G_2$ is the \emph{convolution} of the count functions of $G_1$ and $G_2$, i.e.,
$f_{G_1 \cup G_2}(t) = \sum_{t_1 + t_2 = t} f_{G_1}(t_1) \cdot f_{G_2}(t_2)$. Hence, one can evenly
partition the input items into $G_1 \cup G_2$, recursively compute their (sparsified) count
functions, and then return their (sparsified) convolution. This is sometimes called a
``partition-and-convolve'' paradigm, which is common in the literature of Knapsack-type problems
\cite{KX19,Bring17,BateniHSS18,DJM23, BringmannDPesa24,FJ25} with different notions of
``convolution'' tailored to different settings.

\paragraph*{Balancing via random partitioning.}
Our improvement is inspired by the following common idea that has been frequently used in recent
\emph{randomized} algorithms for Knapsack-type problems. Fix any (unknown) feasible knapsack subset
$I \subseteq G$ whose total weight is $t$ (here and below, one may focus on the top level of
recursion and think of $G = [n]$, but the same discussion also applies to lower levels of recursion,
where $G$ is a subset of the input items $[n]$). If we randomly partition $G$ into $G_1 \cup G_2$,
then with high probability $I$ is balanced with respect to this partition, i.e., the total weights
of $I \cap G_1$ and $I \cap G_2$ are both concentrated in an interval of radius
$\tcO (\sqrt{|G|} w)$ around their common expectation $t/2$, where $w$ denotes the largest individual item weight
in $G$. Therefore, when convolving the answers from $G_1$ and from $G_2$, it suffices to only range
over $t_1 + t_2 = t$ such that $|t_1 - t_2| \le \tcO (\sqrt{|G|} w)$ while still capturing the
subset $I$, thereby reducing the time complexity of convolution computation. This balancing idea, in
various forms, often in conjunction with partition-and-convolve, has found applications in
pseudopolynomial-time algorithms for Subset Sum \cite{ChenLMZfocs24} and 0-1 Knapsack
\cite{BateniHSS18,BringmannC23,HeX24,BringmannDPesa24}.\footnote{Many other works on Knapsack-type
problems also used random partitioning, but only required a \emph{weaker} concentration guarantee,
namely that when $G$ is randomly partitioned into $g$ groups, each group has sum $\tcO(|G|w/g)$ with
high probability. Examples include the classic randomized Subset Sum algorithm of Bringmann
\cite{Bring17}, as well as the recent randomized FPRAS for \#Knapsack by Feng and Jin \cite{FJ25}.
In this context, a recent independent work by Chan \cite{Chan26} successfully derandomized
Bringmann's Subset Sum algorithm. However, Chan's techniques seem insufficient for derandomizing
Feng and Jin \cite{FJ25}'s algorithm, since a more inherent source of randomness in their algorithm
is a Monte Carlo estimation step originating from Dyer \cite{Dyer03}'s FPRAS. }

Unsurprisingly, all the aforementioned algorithms are inherently randomized. It is not clear how one
would derandomize this approach: in order to ensure every possible $I \subseteq G$ is balanced, a
single deterministic partition $G = G_1 \cup G_2$ clearly does not suffice. A natural attempt would
be to try multiple partitions from a pseudo-random family, but this would hugely increase the time
complexity. Nevertheless, we manage to derandomize this balancing approach using just a single
deterministic partitioning.

\paragraph*{Deterministic balancing.}
In principle, such derandomization is possible because it is \emph{not} necessary for the partition
$G = G_1 \cup G_2$ to balance \emph{every single} possible feasible subset $I \subseteq G$. Instead,
we only aim to balance \emph{a $(1 - \eps)$-fraction} of the feasible subsets $I$, which is
tolerable since we are allowed to incur $\eps$ relative error in the count.

We provide intuition on how such deterministic partitioning can be constructed. For simplicity,
we first illustrate our ideas under the \emph{bounded-ratio} assumption that $w_i \in [w/2, w]$ for
all items $i \in G$. Later, we will describe our additional technical ideas for removing this
assumption. To simplify notation and illustration, we also assume the indices of the items are
$G = \{1, 2, \dots, |G|\}$, and $|G|$ is an even number.

For $I \subseteq G$, let
$w(I) \coloneqq \sum_{i \in I} w_i$. We show how to
deterministically construct a partition $G = G_1 \cup G_2$, such that for every capacity $T$, among
all feasible subsets
$\Omega \coloneqq \{I \subseteq G: w(I) \le T\}$, at least a
$(1 - \eps)$-fraction of the subsets $I \in \Omega$ are balanced in the sense that
\begin{equation}
    \label{eq:intro-weight-discrepancy}
    \Bigl\lvert w(I \cap G_1) - w(I \cap G_2)
    \Bigr\rvert \le \tcO\bigl(\sqrt{|G|\log(1/\eps)} \cdot w\bigr).
\end{equation}
Note that we still need to argue why \cref{eq:intro-weight-discrepancy} helps speed up the convolution computation in
the framework of \cite{GMW18}. This is intuitive but not obvious and we shall come back to this
shortly.

We start with the toy case where all items $i \in G$ have equal weight $w_i = w$. In this case, we
observe that any partition $G=G_1\cup G_2$ with $|G_1|=|G_2|=|G|/2$ already
satisfies \cref{eq:intro-weight-discrepancy} for at least a $(1 - \eps)$-fraction of the subsets
$I \in \Omega$. Indeed, since all items are symmetric, $\Omega = \binom{G}{\le \lfloor T/w\rfloor}$,
so a uniformly random $I \in \Omega$ satisfies \cref{eq:intro-weight-discrepancy} with probability
at least $(1 - \eps)$ by a standard Chernoff bound.

The above symmetry-based argument can be vastly generalized to the case where $G$ consists of pairs
of equal-weight items, i.e., $w_1 = w_2, w_3 = w_4, \dots, w_{|G| - 1} = w_{|G|}$. In this case, we
claim that the desired balancing property can be satisfied by choosing
$G_1 = \{1, 3, 5, \dots\}, G_2 = \{2, 4, 6, \dots\}$ which perfectly separates every pair. To see
this, observe that a uniformly random $I \in \Omega$ can be sampled in two stages:
\begin{itemize}
    \item First, jointly determine the number of selected items from each pair $\{2k - 1, 2k\}$ according to their distribution
        under a uniformly random $I\in\Omega$. Each pair has zero, one, or two selected items.
    \item Then, if exactly one of $\{2k - 1, 2k\}$ should be included in $I$, flip a coin to decide
        which one, so that it contributes to $G_1 \cap I$ or $G_2 \cap I$ each with $1/2$ probability.
        The coin flips are independent for all $k$.
\end{itemize}
By the symmetry between even- and odd-indexed items, one can apply a Chernoff bound to the
randomness in the second stage, and show that \cref{eq:intro-weight-discrepancy} indeed holds with at least
$(1 - \eps)$ probability over a random $I \in \Omega$.

Now we consider the general bounded-ratio case with
$w/2 \le w_1 \le w_2 \le \dots \le w_{|G|} \le w$. It turns out that our even-odd indexed
partitioning $G = G_1 \cup G_2$ still satisfies the desired balancing property with only a small loss,
worsening $\log(1/\eps)$ in \cref{eq:intro-weight-discrepancy} to $\log(|G|/\eps)$. The idea is to relate
the weight sequence $(w_1, w_2, \dots, w_{|G|})$ with its rounded version,
$(\widetilde w_1, \widetilde w_2, \dots, \widetilde w_{|G|}) \coloneqq (w_2, w_2, w_4, w_4, \dots)$.
The total additive error between them is only
\begin{equation}
    \label{eq:intro-pair-rounding-error}
    (w_2 - w_1) + (w_4 - w_3) + \dots + (w_{|G|} - w_{|G| - 1}) \le w_{|G|} - w_1 \le w/2.
\end{equation}
By the previous paragraph, we know the desired balancing property already holds for this rounded
version of the weight sequence. Intuitively, this property should carry over to the original version
as well, with small loss. Indeed, in the next paragraph we provide a proof that justifies this
intuition, by adapting Dyer's \cite{Dyer03} mapping argument which was originally introduced in his
randomized FPRAS for \#Knapsack.

Let $\widetilde w(I) \coloneqq \sum_{i \in I} \widetilde w_i$ for any subset $I \subseteq G$. Consider
the collection of feasible rounded subsets
$\widetilde \Omega \coloneqq \{I \subseteq G: \widetilde w(I) \le T\} \subseteq \Omega$.
Define a mapping $\pi$ from $I \in \Omega \setminus \widetilde \Omega$ by letting
$\pi(I) = I \setminus \{i\}$ for an arbitrary item $i \in I$. Observe that
$\pi(I) \in \widetilde \Omega$ must hold, since the weight of the removed item, $w_i \ge w/2$,
offsets the weight difference $\widetilde w(I) - w(I)$ incurred when
rounding up (by \cref{eq:intro-pair-rounding-error}). Additionally define $\pi(I) = I$ for
$I \in \widetilde \Omega$, so that $\pi$ maps from $\Omega$ to $\widetilde \Omega$. Note the map
$\pi$ does not affect the balancedness very much: If $I \in \Omega$ has large discrepancy,
\begin{equation}
    \label{eq:intro-original-large-discrepancy}
    \Bigl\lvert w(I \cap G_1)  - w(I \cap G_2)
    \Bigr\rvert \ge c\sqrt{|G|\log (|G|/\eps)} w + 2w,
\end{equation}
then $Y = \pi(I) \in \widetilde \Omega$ also has large discrepancy,
\begin{equation}
    \label{eq:intro-rounded-large-discrepancy}
    \Bigl\lvert \widetilde w(Y \cap G_1) - \widetilde w(Y \cap G_2)
    \Bigr\rvert \ge c\sqrt{|G|\log(|G|/\eps)} w.
\end{equation}
Indeed, deleting one item changes the discrepancy by at most $w$, and rounding the remaining items
changes it by at most $w/2$. Applying the concentration bound from the previous
paragraph with failure probability
$\eps/(|G|+1)$, we obtain that, for a sufficiently
large constant $c>0$, there are at most
$\eps|\widetilde\Omega|/(|G|+1)$ sets
$Y\in\widetilde\Omega$ satisfying
\cref{eq:intro-rounded-large-discrepancy}. Now, each such $Y$ has at most $(|G| + 1)$
preimages under $\pi$: $Y$ itself and the possible sets obtained by adding back one removed item.
Therefore, the number of $I \in \Omega$ satisfying \cref{eq:intro-original-large-discrepancy} is at most
$(|G|+1)\cdot \eps|\widetilde\Omega|/(|G|+1)=\eps|\widetilde\Omega|\le\eps|\Omega|$
as desired.

\paragraph*{Applying balancing to sparsified count functions.}
Now we discuss how the balancing property \cref{eq:intro-weight-discrepancy} helps reduce the time complexity
for convolution in \cite{GMW18}'s framework. We are still focusing on the bounded-ratio case.
As discussed earlier, the bound in \cref{eq:intro-weight-discrepancy} means that when
convolving the (sparsified) count functions $\widetilde f_{G_1}$ and $\widetilde f_{G_2}$, we only
need to iterate over the pairs $(t_1, t_2)$ (where $t_i$ stands for the total weight of items picked
from $G_i$) such that $|t_1 - t_2| \le \tcO(\sqrt{|G|\log(1/\eps)} \cdot w)$, and
$\widetilde f_{G_1}(t_1) \neq 0, \widetilde f_{G_2}(t_2) \neq 0$. We call such pairs $(t_1, t_2)$
the useful pairs. The number of useful pairs not only depends on the upper bound on $|t_1 - t_2|$, but also on the
number of non-zero entries of $\widetilde f_{G_1}, \widetilde f_{G_2}$ in the relevant regions,
which in turn depends on how fast the prefix sums $f^\le_{G_1}(t_1), f^\le_{G_2}(t_2)$ grow as
$t_1, t_2$ change. More specifically, if $\widetilde f_{G_2}$ is an $\eps$-sparsified
$(1 + \eps)$-sum-approximation of $f_{G_2}$, then the number of non-zero entries
$\widetilde f_{G_2}(t_2)$ in an interval $t_2 \in [t_L, t_R]$ is at most
\[
    \cO\left(
        1 + \log_{1 + \eps}
        \left(\frac{f^\le_{G_2}(t_R)}{f^\le_{G_2}(t_L)}\right)
    \right).
\]
In our scenario, for every fixed $t_1$, the relevant interval for $t_2$ is
\[
    [t_L, t_R] = [\max\{0, t_1 - r\}, t_1 + r],
    \qquad
    r = \tcO\bigl(\sqrt{|G|\log(1/\eps)}\,w\bigr).
\]

We claim that the ratio $f^\le_{G_2}(t_R)/f^\le_{G_2}(t_L)$ is at most
$n^{\cO(1 + (t_R - t_L)/w)}$, which implies that there are only $\tcO(\sqrt{|G|}/\eps)$ many $t_2$
such that $(t_1, t_2)$ is useful (compared to the naive bound $\tcO(|G|/\eps)$). We use the mapping
argument of Dyer~\cite{Dyer03} again. For every $I \subseteq G_2$ with $w(I) \le t_R$,
remove $\lceil 2(t_R - t_L)/w\rceil$ items from $I$, or all items if fewer are present. The
remaining total weight is at most $t_L$, since every item has weight at least $w/2$ and $t_L \ge 0$.
Fixing a rule for choosing the removed items gives a map from $\Omega_{G_2}[t_R] := \{ I \subseteq G_2 : w(I) \le t_R\}$ to
$\Omega_{G_2}[t_L] := \{ I \subseteq G_2 : w(I) \le t_L\}$, under which every element in the latter set has at most
$n^{\cO(1 + (t_R - t_L)/w)}$ preimages, corresponding to the possible choices of removed items. This
proves the desired ratio bound.

\paragraph*{Relaxing the bounded-ratio assumption.}
A natural approach to handling arbitrary weights is to divide
the items into $\cO(\log T)$ bounded-ratio classes.
However, this introduces an undesirable dependence on $T$.

The bounded-ratio assumption was used both to prove the
balancing property (\cref{eq:intro-weight-discrepancy})
and to bound the number of useful pairs.
In particular, the lower bound $w/2$ on individual weights
allowed us to convert a bound on weight discrepancy into
a bound on the number of items removed in Dyer's mapping
argument. This conversion is no longer possible for
arbitrary weights.

We therefore redefine the balancing property in terms of
the number of items that must be removed, rather than the
weight discrepancy. More precisely, we call a subset
$I\subseteq G$ \emph{$C$-balanced} if removing at most $C$
of the heaviest items from either side (considered
separately) makes its remaining weight at most that of
the other side (see \Cref{def:balanced-subsets}).
Our balancing lemma shows that, with an appropriate
choice of $C=\tcO(\sqrt{|G|})$, the same even-odd
partition balances all but an $\eps$ fraction of
the feasible subsets.

To prove this, we first modify the rounding scheme.
As before, we round each pair of weights so that
$\widetilde w_{2k-1}=\widetilde w_{2k}=w_{2k}$,
but leave the heaviest selected item unrounded.
Specifically, for nonempty $I$, define
\[
    \widetilde w(I)
    :=w_\ell+
    \sum_{i\in I\setminus\{\ell\}}\widetilde w_i,
    \qquad \ell=\max I,
\]
and set $\widetilde w(\varnothing)=0$.
By the same telescoping argument as before,
\[
    0\le\widetilde w(I)-w(I)
    \le\sum_{i<\ell}(\widetilde w_i-w_i)
    \le w_\ell.
\]
Consequently, deleting the heaviest selected item when
necessary maps every original feasible subset to a
rounded feasible subset, with at most $|G|+1$ preimages
per image. Reinserting the deleted item increases the
number of removals needed for balance by at most one.

We analyze a uniformly random rounded feasible subset
using the same independent swaps within adjacent pairs
as before, except that we leave the heaviest selected
item and its pair unchanged. These swaps preserve both
the rounded weight and the uniform distribution.

The main new idea is to separate the selected items into
heavy and light parts. Fix a rounded feasible subset $J'$
before the swaps. If $|J'|\le C$, it is already
$C$-balanced. Otherwise, choose an odd index $h$ so that
the suffix $[h,|G|]$ contains the $C-1$ heaviest selected
items and possibly one additional selected item.
For the subset $J$ after the swaps, write
\[
    H:=J\cap[h,|G|],
    \qquad
    L:=J\cap[1,h-1].
\]
Then $|H|\le C$, every item in $L$ weighs at most $w_h$,
and every item in $H$ weighs at least $w_h$.

Hoeffding's inequality bounds the weight discrepancy
between the two light parts at the scale $w_h$;
the bias caused by unequal weights within pairs is
only $\cO(w_h)$ by telescoping.
Meanwhile, Chernoff's bound guarantees that both sides
receive sufficiently many heavy items.
Choosing the constant and logarithmic factors in
$C=\tcO(\sqrt{|G|})$ sufficiently large therefore
ensures, with high probability, that the heavy items
on either side outweigh the discrepancy between
the two light parts.
In particular,
\[
    w(L\cap G_1)-w(L\cap G_2)
    \le w(H\cap G_2),
\]
and the analogous inequality holds with $G_1$ and
$G_2$ interchanged.
Hence, deleting $H\cap G_1$ from the first side gives
\[
    w\bigl((J\cap G_1)\setminus H\bigr)
    =w(L\cap G_1)
    \le w(L\cap G_2)+w(H\cap G_2)
    =w(J\cap G_2).
\]
Since $|H|\le C$, only $C$ items need to be removed.
Together with the deletion mapping above,
this establishes the balancing guarantee for the
original feasible subsets.

Finally, this deletion-based balancing property also
allows us to bound the number of useful pairs without
assuming bounded ratios.
Since a fixed subset has at most
$|G|^{\cO(C)}$ possible preimages under deletion
of at most $C$ items, we define useful pairs by
\[
    \widetilde f_{G_1}^{\le}(t_1)
    \le 6|G|^C\widetilde f_{G_1}^{\le}(t_2),
    \qquad
    \widetilde f_{G_2}^{\le}(t_2)
    \le 6|G|^C\widetilde f_{G_2}^{\le}(t_1),
\]
rather than by a bound on $|t_1-t_2|$
(see \Cref{def:useful-pairs}).
For each fixed $t_1$, the number of qualifying
$t_2\ge t_1$ is at most
\[
    1+\log_{1+\eps}(6|G|^C)
    =\tcO(C/\eps)
    =\tcO(\sqrt{|G|}/\eps).
\]
The case $t_1\ge t_2$ is symmetric.
Since each support has size $\tcO(|G|/\eps)$,
there are only $\tcO(|G|^{3/2}/\eps^2)$ useful pairs.
The formal analysis in
\Cref{lem:omitted-convolution-mass}
bounds the contribution of the excluded pairs.

\subsection{Open Problems}
A natural question is whether the time complexity of our FPTAS could be further improved to
near-quadratic or subquadratic time in $n$, and ultimately match the $n^{1.5}$ dependence of the
randomized algorithm. Another interesting question is whether the
$\widetilde O(n^{1.5}/\eps^2)$-time FPRAS of Feng and Jin \cite{FJ25} could be improved to
near-linear-time in $n$. On the other hand, it would be interesting to see if there is any barrier
for achieving $O(n^{2 - \Omega(1)}/\poly(\eps))$ time deterministically (similar to the works of
\cite{AbboudB17,AbboudR18,ChenGLRR19}).

\subsection{Paper Organization}
In \Cref{sec:preliminaries}, we give standard assumptions, definitions, and notation used throughout the
paper. In \Cref{sec:balanced-subsets}, we introduce essential lemmas for our deterministic balancing
approach, which constitute our main technical contribution. Finally, in \Cref{sec:fptas}, we
incorporate the tools from
\Cref{sec:balanced-subsets} into the partition-and-convolve framework of~\cite{GMW18} and obtain an improved
FPTAS, proving \Cref{thm:count-knapsack-fptas}.

\section{Preliminaries}
\label{sec:preliminaries}

\paragraph{Assumptions}
Let $w_1, w_2, \dots, w_n$ be the non-negative integer weights of the $n$ input items, and let $T$
be the input integer knapsack capacity.
We assume $w_i \le T$ for all $i$, since items with weights outside this range can be removed.
We assume $w_i > 0$; otherwise, remove the set $S$ of items with weight $0$ and multiply the final
answer by $2^{|S|}$.
We assume $w_1 + \cdots + w_n > T$; otherwise, every subset of items has total weight at most $T$
and we can return $2^n$.
We assume $\eps \in (0, 1/3)$; if $\eps \ge 1/3$, we may set $\eps$ to any fixed value smaller
than $1/3$.
We further assume that the $n$ weights are sorted in non-decreasing order,
$w_1 \le w_2 \le \cdots \le w_n$. Hence, for any non-empty $I \subseteq [n]$, the item
$\ell = \max I$ is the heaviest item in $I$.

\paragraph{Set Notation}
We write $[n] \coloneqq \{1, 2, \ldots, n\}$. For any reals $a < b$, we write $r \in (a, b)$ if
$a < r < b$ and write $r \in [a, b]$ if $a \le r \le b$.

\paragraph{Subset Weight and Feasible Subsets}
For a subset $S \subseteq [n]$, define its \emph{subset weight} by
\[
    w(S) \coloneqq \sum_{i \in S} w_i.
\]
For a subset $G \subseteq [n]$ and $t \ge 0$, let $\Omega_G[t]$ denote the collection of subsets of
$G$ with subset weight at most $t$. That is,
\[
    \Omega_G[t] \coloneqq \{\, I \subseteq G : w(I) \le t \,\}.
\]
We say that a subset $I \subseteq [n]$ is \emph{feasible} with respect to $t$ and $G$ if
$I \in \Omega_G[t]$. When the context is clear, we simply say that $I$ is \emph{feasible}, omitting
the dependence on $t$ and $G$. We denote the collection of feasible subsets with respect to $t$ and
$[n]$ as $\Omega[t]$ dropping the subscript. Our goal is to approximate $|\Omega[T]|$ efficiently.

\paragraph{Computational Model.}
We make the standard assumption that arithmetic operations
on the input integers take constant time. We analyze our
algorithm in a unit-cost arithmetic model, where arithmetic
operations and comparisons on intermediate count values
also take constant time. Thus, our running-time bounds count
arithmetic operations and do not necessarily imply the same
bit complexity.

\newcommand{\C}{\bigl\lceil 32c\sqrt{m \log m} + 2\bigr\rceil}
\newcommand{\D}{\bigl(32c\sqrt{m \log m} + 1\bigr)}
\newcommand{\E}{\bigl(32c\sqrt{m \log m}\bigr)}
\newcommand{\G}{\bigl\lceil 32c\sqrt{|G|\log |G|}+2 \bigr\rceil}
\section{Balanced Subsets}
\label{sec:balanced-subsets}
Our algorithm relies on a notion of \emph{balancedness} of subsets, which we discuss in this
section. Throughout this section, we fix a nonempty item group $G \subseteq [n]$ and a knapsack
capacity $t \ge 0$. Recall that we assume $w_1 \le w_2 \le \dots \le w_n$. We begin with the parity
partition of $G$.

\begin{definition}[Parity partition of $G$]
    \label{def:parity-partition}
    For an item group $G = \{i_1, i_2, \ldots, i_{|G|}\} \subseteq [n]$ with
    $i_1 < i_2 < \ldots < i_{|G|}$, we say $i_k \in G$ is \emph{even-indexed} if $k$ is even and
    \emph{odd-indexed} if $k$ is odd. Let $G_e$ be the set of even-indexed elements of $G$ and let
    $G_o$ be the set of odd-indexed elements of $G$. That is
    \[
        G_e := \{i_k \in G : k \text{ is even} \},
        \qquad
        G_o := \{i_k \in G : k \text{ is odd} \}.
    \]
\end{definition}
Clearly, we have $G = G_e \sqcup G_o$. We will show that for \emph{almost all} feasible subsets
$I \in \Omega_G[t] := \{I \subseteq G : w(I) \le t \}$, the even- and odd-indexed parts of $I$,
namely $I \cap G_e$ and $I \cap G_o$, are \emph{well balanced}—in the sense that removing a
\emph{small} number of the heaviest items from the heavier side makes its weight at most that of the
lighter side. Consequently, restricting attention to \emph{well-balanced feasible subsets} excludes
only a negligible fraction of feasible subsets, and we can gain computational savings in the merging
step of~\cite{GMW18} by convolving the two sum-approximations only over capacity ranges relevant to
well-balanced feasible subsets. We refer to \Cref{sec:fptas} for more details on the merging step
and focus here on formalizing balancedness.

As mentioned earlier in \Cref{sec:introduction}, in order to work without the bounded-ratio assumption on
input weights, it is more convenient to define the notion of balancedness based on item-removal
rather than the absolute weight difference. The definition is as follows:
\begin{definition}[$C$-balanced subsets with respect to $G$]
    \label{def:balanced-subsets}
    For $C > 0$, a subset $I \subseteq G$ is \emph{$C$-balanced with respect to $G$} if both of the
    following hold:
    \begin{itemize}
        \item There exists a subset $S \subseteq I \cap G_e$ of size $|S| \le C$ such that
            $w\bigl((I \cap G_e) \setminus S\bigr) \le w(I \cap G_o)$, and
        \item there exists a subset $S \subseteq I \cap G_o$ of size $|S| \le C$ such that
            $w\bigl((I \cap G_o) \setminus S\bigr) \le w(I \cap G_e)$,
    \end{itemize}
\end{definition}
Equivalently, $I$ is $C$-balanced with respect to $G$ if removing at most $C$ of the heaviest items
from the heavier side of $I \cap G_e$ and $I \cap G_o$ would make its remaining weight smaller than
or equal to the weight of the other side.

Let $c > 1$ be a parameter, and set $C := \G$. We will refer to $c$ as the balancing parameter. As
mentioned above, we restrict our attention to $C$-balanced feasible subsets of $G$ instead of
\emph{all} feasible subsets. To ensure that this restriction preserves the approximation accuracy,
we show that restricting to $C$-balanced feasible subsets excludes at most an $\cO(1/|G|^{c - 1})$
fraction of all feasible subsets. This claim is formalized in the following lemma.

\begin{lemma}[Balancing Lemma]
    \label{lem:deterministic-balancing}
    Let $t \ge 0$ be an arbitrary knapsack capacity and let $G \subseteq [n]$ be an arbitrary
    nonempty item group. Let $c > 1$ be an arbitrary balancing parameter and define $C := \G$. Let
    \[
        \begin{aligned}
            \Omega_G[t] & := \{I \subseteq G : w(I) \le t \}
            \qquad \text{and} \\
            \Gamma_G[t] & := \{I \in \Omega_G[t] : I \text{ is $C$-balanced with respect to } G \}.
        \end{aligned}
    \]
    Then
    \[
        |\Gamma_G[t]| \ge \bigl(1 - 8/|G|^{c - 1}\bigr)|\Omega_G[t]|.
    \]
\end{lemma}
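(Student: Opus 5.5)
Write $m := |G|$ and re-index $G$ as $1,\dots,m$ with $w_1\le\cdots\le w_m$. The proof follows the outline in \Cref{sec:introduction}: first pass to a rounded collection, then use a swap-symmetry argument on it, and finally transfer back via Dyer's deletion map.

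\textbf{Rounding and deletion map.} Pair up $\{2k-1,2k\}$ (if $m$ is odd, the last item is a singleton). Set $\widetilde w_{2k-1}=\widetilde w_{2k}=w_{2k}$. For nonempty $I$ with $\ell=\max I$, define
\[
\widetilde w(I)=w_\ell+\sum_{i\in I\setminus\{\ell\}}\widetilde w_i,
\]
and let $\widetilde\Omega:=\{I\subseteq G:\widetilde w(I)\le t\}\subseteq\Omega_G[t]$. By telescoping, $\widetilde w(I)-w(I)\le w_\ell$. Define $\pi(I)=I$ on $\widetilde\Omega$ and $\pi(I)=I\setminus\{\max I\}$ otherwise. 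Then $\pi(I)\in\widetilde\Omega$: after deleting $\ell$, the new maximum $\ell'$ satisfies
\[
\widetilde w(I\setminus\{\ell\})\le w(I\setminus\{\ell\})+w_{\ell'}\le w(I)\le t.
\]
Each $Y$ has at most $m+1$ preimages. Moreover, if $Y=\pi(I)$ is $(C-1)$-balanced, then $I$ is $C$-balanced: re-adding one item to one side can be absorbed by removing that item as well. The other side's inequality is only helped, since that side gains weight. It therefore suffices to show that at most $8m^{-c}|\widetilde\Omega|$ sets in $\widetilde\Omega$ fail to be $(C-1)$-balanced. This gives
\[
|\Omega_G[t]\setminus\Gamma_G[t]|\le (m+1)\cdot 8m^{-c}|\widetilde\Omega|\lesssim 8m^{1-c}|\Omega_G[t]|,
\]
where the constants need slight care, e.g.\ $(m+1)\le 2m$ absorbed into the choice of $8$ and the $+2$ in $C$.

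\textbf{Swap symmetry on $\widetilde\Omega$.} For $J\in\widetilde\Omega$, let $\ell=\max J$. For each pair $\{2k-1,2k\}$ entirely below the pair containing $\ell$ that meets $J$ in exactly one element, swapping which element is chosen preserves $\widetilde w$, preserves the maximum, and hence is a bijection on $\widetilde\Omega$. The orbits under independent swaps partition $\widetilde\Omega$ into classes on which the uniform measure is uniform over all $2^{s}$ swap patterns. It thus suffices to fix one orbit representative $J'$ and bound the fraction of unbalanced patterns by $8m^{-c}$. If $|J'|\le C-1$, every element of the orbit is trivially balanced. Otherwise pick an odd $h$ so that $[h,m]$ contains the $C-2$ heaviest selected items plus at most one more, and the pair containing $\ell$. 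Write $H=J\cap[h,m]$, whose membership is unaffected by swaps below $h$ and satisfies $|H|\le C-1$, and $L=J\cap[1,h-1]$, whose items all weigh at most $w_h$.

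\textbf{Concentration (main obstacle).} Write $D:=w(L\cap G_o)-w(L\cap G_e)$. Doubled pairs contribute a deterministic $\sum(w_{2k-1}-w_{2k})$, which is bounded in absolute value by $w_h$ by telescoping. Singly-selected pairs contribute independent $\pm$ terms of magnitude at most $w_h$ (up to the small within-pair asymmetry, again $\le w_h$ total). Hoeffding then gives $|D|\le 2w_h+w_h\sqrt{2cm\log m}$ except with probability $2m^{-c}$. Next I need $w(H\cap G_e)$ and $w(H\cap G_o)$ each to be at least $|D|$; it is enough that each side receives at least $\sqrt{2cm\log m}+2$ heavy items, each of weight at least $w_h$. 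The heavy items in split pairs are assigned by fair coins, while doubled heavy pairs give one item to each side. So the count on each side is at least half of roughly $C-2$ minus a binomial deviation, and Chernoff with the factor $32c$ in $C$ makes this fail with probability at most $2m^{-c}$ per side. The delicate part is the fixed top pair containing $\ell$ and the odd-$m$ singleton, which are not randomized, and the bookkeeping so that the constant $32$ and the total failure probability $8m^{-c}$ come out. Given both events, removing $H\cap G_o$ (size at most $C-1$) yields
\[
w(L\cap G_o)\le w(L\cap G_e)+w(H\cap G_e)=w(J\cap G_e),
\]
and symmetrically for the other side. This completes the argument.
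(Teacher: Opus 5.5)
Your proposal is correct and follows essentially the same route as the paper. Both use rounding with the maximum left unrounded, and the heaviest-item deletion map with at most $m+1$ preimages, which turns non-$C$-balanced sets into non-$(C-1)$-balanced ones. Both then use uniformity over pair-swap classes and a light/heavy split at an odd threshold $h$, with Hoeffding on the light part and Chernoff on the heavy part.

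The one slip is the constant in the reduction. Since $(m+1)\cdot 8m^{-c}$ is not at most $8m^{1-c}$, the target on $\widetilde\Omega$ must be $4m^{-c}$, as in the paper via $m+1\le 2m$. Your own estimates deliver this once you compute them rather than quoting $2m^{-c}$ per event:

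\begin{itemize}
\item The two-sided Hoeffding tail at deviation $w_h\sqrt{2cm\log m}$, with at most $m/2$ terms of range at most $2w_h$, is only $2m^{-2c}$.
\item Each heavy-count Chernoff tail is $\exp\bigl(-\Omega(c\sqrt{m\log m})\bigr)\le m^{-c}$ in the nontrivial regime $m\ge C$.
\end{itemize}

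The union bound then gives at most $2m^{-2c}+2m^{-c}\le 4m^{-c}$, which yields the stated $8/m^{c-1}$.
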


To make our argument cleaner and avoid heavy notation, we reindex $G$ so that it becomes a
contiguous index set. Suppose $G = \{i_1, i_2, \ldots, i_m\}$ with $i_1 < i_2 < \ldots < i_m$. We
reindex $G$ as $G = [m] = \{1, 2, \ldots, m\}$ and set the item weights $w_k := w_{i_k}$
accordingly. Note that $w_1, \ldots, w_m$ are not necessarily the first $m$ items among the $n$
input items. However, the non-decreasing weight order $w_1 \le w_2 \le \cdots \le w_m$ and the even-
and odd-index parity of items in $G$ is preserved, so we may safely work with $G = [m]$. Note also
that
\[
    C := \C.
\]
With this reindexing, we obtain the following definitions and lemma:

\begin{definition}[Even/odd-indexed subset]
    For any subset $I \subseteq G = [m]$, define the even- and odd-indexed subsets of $I$ by
    \[
        I_e \coloneqq \{i \in I : i \text{ is even} \},
        \qquad
        I_o \coloneqq \{i \in I : i \text{ is odd} \}.
    \]
\end{definition}

\begin{definition}[$C$-balanced subset]
    \label{def:indexed-balanced-subsets}
    We say that a subset $I \subseteq G = [m]$ is \emph{$C$-balanced} if both of the following hold:
    \begin{itemize}
        \item There exists a subset $S \subseteq I_e$ of size $|S| \le C$ such that
            $w(I_e \setminus S) \le w(I_o)$,
        \item and, there exists a subset $S \subseteq I_o$ of size $|S| \le C$ such that
            $w(I_o \setminus S) \le w(I_e)$.
    \end{itemize}
\end{definition}

For all $I \subseteq G$, we have $I_e = I \cap G_e$ since $G_e = [m]_e$. Similarly, we have
$I_o = I \cap G_o$, so this definition is compatible with \cref{def:balanced-subsets}.

\begin{definition}[Feasible subsets]
    We denote the collection of feasible subsets by
    \[
        \Omega = \{I \subseteq [m] : w(I) \le t \}.
    \]
\end{definition}

\begin{definition}[$C$-balanced feasible subsets]
    \label{def:balanced-feasible-subsets}
    We denote the collection of $C$-balanced feasible subsets by
    \[
        \Gamma \coloneqq \bigl\{I \in \Omega : I \text{ is $C$-balanced} \bigr\}.
    \]
\end{definition}

With these definitions, \Cref{lem:deterministic-balancing} can be restated as follows.
\begin{restatable}[Balancing Lemma Reformulated]{lemma}{balancingreformulated}
    \label{lem:indexed-balancing}
    \[
        |\Gamma| \ge \bigl(1 - 8/m^{c - 1}\bigr)|\Omega|.
    \]
\end{restatable}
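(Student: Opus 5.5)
We follow the plan sketched in the technical overview: a rounding-plus-deletion map reduces the problem to a rounded instance, and on the rounded instance we use a symmetrization argument over adjacent pairs. For nonempty $I\subseteq[m]$ with $\ell=\max I$ we set $\widetilde w(I):=w_\ell+\sum_{i\in I\setminus\{\ell\}}\widetilde w_i$, where $\widetilde w_{2k-1}=\widetilde w_{2k}=w_{2k}$ (and, if $m$ is odd, $\widetilde w_m=w_m$). We also set $\widetilde w(\varnothing)=0$. Telescoping over pairs lying entirely below $\ell$ gives $0\le \widetilde w(I)-w(I)\le w_\ell$. Let $\widetilde\Omega:=\{I:\widetilde w(I)\le t\}\subseteq\Omega$, and define $\pi:\Omega\to\widetilde\Omega$ by $\pi(I)=I$ if $I\in\widetilde\Omega$ and $\pi(I)=I\setminus\{\max I\}$ otherwise. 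We must check $\pi(I)\in\widetilde\Omega$. Write $I'=I\setminus\{\ell\}$. Then
\[
\widetilde w(I')\le \widetilde w(I)-w_\ell\le w(I)\le t,
\]
because rounded weights of elements below $\ell$ only increase and the new maximum is itself rounded at most as much. Each $Y\in\widetilde\Omega$ has at most $m+1$ preimages. If $I$ is not $C$-balanced, then $\pi(I)$ is not $(C-1)$-balanced, since re-adding one item can increase the number of removals needed by at most one. Hence it suffices to show that at most a $8/m^{c}$ fraction of $\widetilde\Omega$ fails to be $(C-1)$-balanced. This gives $|\Omega\setminus\Gamma|\le (m+1)\cdot 8|\widetilde\Omega|/m^{c}\lesssim 8|\Omega|/m^{c-1}$; the constants can be adjusted by using $C-2$ in place of $C-1$ in the inner statement, which is why $C$ carries a "$+2$".

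For the rounded instance we introduce the swap involution. Given $J\in\widetilde\Omega$ with $\ell=\max J$, consider each full pair $\{2k-1,2k\}$ lying strictly below the pair of $\ell$ and containing exactly one element of $J$. Swapping which element of the pair is chosen preserves $\widetilde w(J)$, preserves $\max J$, and keeps the pair of $\ell$ untouched. So $\widetilde\Omega$ is partitioned into orbits, and within an orbit, a uniform member is obtained by independent fair coin flips at the singleton pairs. It therefore suffices to show that, within each orbit, at most a $8/m^{c}$ fraction of members are not $(C-2)$-balanced.

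Fix an orbit, i.e.\ a fixed pattern $J'$. If $|J'|\le C-2$, every member is trivially balanced. Otherwise, choose an odd index $h$ such that the suffix $[h,m]$ contains between $C'-1$ and $C'$ selected items, where $C'\approx C-2$; this count is determined by the pattern up to the parity of a single pair. Split each member $J$ as $H=J\cap[h,m]$ and $L=J\cap[1,h-1]$. Every item of $L$ weighs at most $w_h$, every item of $H$ weighs at least $w_h$, and $|H|\le C'$.

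For the light part, $w(L_o)-w(L_e)$ equals a sum of independent terms $\pm w_{2k}$ or $\pm w_{2k-1}$, one per singleton pair, plus a deterministic bias of magnitude at most $\sum_k (w_{2k}-w_{2k-1})\le w_h$. By Hoeffding's inequality, $|w(L_o)-w(L_e)|\le w_h\bigl(1+\sqrt{2c\,m\ln m}\bigr)$ except with probability $2/m^{c}$.

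For the heavy part, the singleton pairs inside $[h,m]$, excluding $\ell$'s pair, are again fair coins, while doubly-selected pairs contribute one item to each side. Among the roughly $C'$ heavy items, a Chernoff bound shows each side receives at least $C'/4$ of them with failure probability at most $2/m^{c}$, using $C'\gtrsim 32c\sqrt{m\log m}\ge 32c\log m$. On the good event, $w(H_e),w(H_o)\ge (C'/4)w_h$, and this exceeds the light discrepancy for the stated constant $32c$. It follows that $w(J_o\setminus H_o)=w(L_o)\le w(L_e)+w(H_e)\le w(J_e)$, and symmetrically with the roles of $e$ and $o$ exchanged. Thus $J$ is $|H|$-balanced.

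The main obstacle is the bookkeeping in this last step. One issue is handling the pair containing $\ell$ and any odd $m$, which are not swapped and may put up to two extra items on one side; these are absorbed by the slack in $C$. The other is that the heavy count per side must be guaranteed even when few heavy pairs are singletons. When many heavy pairs are doubly selected, each side receives an item deterministically, so the worst case is all-singleton and Chernoff applies. If the Chernoff tail is too weak when $C'$ is small, I would fall back to noting that $C'\ge 32c\sqrt{m\log m}$ is always large enough to give failure probability at most $m^{-c}$.
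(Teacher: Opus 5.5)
Your proposal is correct and follows essentially the paper's proof: rounding with the heaviest item left unrounded, the deletion map $\pi$ with at most $m+1$ preimages and the step from ``not $C$-balanced'' to ``not $(C-1)$-balanced'', pair-swap symmetrization within classes fixed by $\max J$ and the pair counts, and an odd threshold $h$ that splits $J$ into a light part (handled by Hoeffding) and a heavy part (handled by Chernoff). The one slip is the constant in your reduction: you need an inner failure fraction of $4/m^c$, so that $4(m+1)/m^c \le 8/m^{c-1}$, not $8/m^c$, and replacing $C-1$ by $C-2$ does not change this factor; however, your own light and heavy estimates already total $2/m^c + 2/m^c = 4/m^c$, so the stated constant $8$ does follow.
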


\subsection{Rounding Scheme \& Heaviest Item Removal}
To prove \Cref{lem:indexed-balancing}, it suffices to show that the feasible subsets in
$\Omega$ that are \emph{not} $C$-balanced constitute at most an $8/m^{c - 1}$ fraction of $\Omega$;
that is, $|\Omega \setminus \Gamma| \le 8/m^{c - 1} |\Omega|$. To this end, we introduce an
auxiliary rounding scheme on item weights and the induced rounded subset weights and rounded
feasible subsets, which we use to relate $|\Omega \setminus \Gamma|$ to $|\Omega|$.

\begin{definition}[Rounded item and subset weights]
    \label{def:rounded-subset-weights}
    For each $1 \le i \le m$, define the \emph{rounded item weight} $\widetilde w_i$ by
    \[
        \widetilde w_{2k - 1} \coloneqq \widetilde w_{2k} \coloneqq w_{2k}
        \qquad \text{for all integers $k \ge 1$ with $2k \le m$}.
    \]
    If $m$ is odd, then set $\widetilde w_m \coloneqq w_m$.
    For a nonempty subset $I \subseteq [m]$, define the \emph{rounded subset weight} by
    \[
        \widetilde w(I) \coloneqq w_\ell + \sum_{i \in I \setminus \{\ell\}} \widetilde w_i,
        \qquad \text{where } \ell \coloneqq \max I.
    \]
    Define $\widetilde w(\varnothing) = 0$.
\end{definition}

In other words, we round item weights by pairing consecutive indices $(2k - 1, 2k)$ and replacing
both weights with the heavier weight $w_{2k}$.
\begin{fact}
    \label{fact:rounding-only-increases-weights}
    For all $1 \le i \le m$, we have $\widetilde w_i \ge w_i$.
\end{fact}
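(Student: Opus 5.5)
The proof is a direct case check against \cref{def:rounded-subset-weights}, using only the standing assumption that the weights in $G=[m]$ are sorted, $w_1 \le w_2 \le \cdots \le w_m$. As noted after the reindexing, this ordering is inherited from the global sorted order of the input weights. Every index $1 \le i \le m$ falls into exactly one of three cases.

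\emph{Case $i = 2k$ with $2k \le m$.} The definition gives $\widetilde w_{2k} = w_{2k}$ directly, so $\widetilde w_i = w_i$.

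\emph{Case $i = 2k-1$ with $2k \le m$.} Here $\widetilde w_{2k-1} = w_{2k}$. Since $2k-1 < 2k \le m$, monotonicity of the sorted weights gives $w_{2k} \ge w_{2k-1}$, hence $\widetilde w_i \ge w_i$.

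\emph{Case $i = m$ with $m$ odd.} The index $m$ is not covered by any pair $(2k-1,2k)$ with $2k \le m$. The definition sets $\widetilde w_m = w_m$, so again $\widetilde w_i = w_i$.

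These cases are exhaustive. An even index $i$ always satisfies $i = 2k \le m$. An odd index $i = 2k-1$ either has $2k \le m$, or else $2k-1 = m$ with $m$ odd. So the inequality $\widetilde w_i \ge w_i$ holds for all $i$.

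There is no real obstacle here. The one point to state explicitly is that the reindexed weights remain non-decreasing, which is exactly the property the paper already records when reducing to $G = [m]$.
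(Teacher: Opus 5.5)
Your proposal is correct and follows the paper's proof: both observe that each rounded weight $\widetilde w_i$ is either $w_i$ or $w_{i+1}$ by construction, and then apply the non-decreasing order of the weights. Your three-case split with the exhaustiveness check is simply a more explicit version of the paper's one-line argument.
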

\begin{proof}
    By construction, either $\widetilde w_i = w_i$ or $\widetilde w_i = w_{i + 1}$. Note that items
    are sorted in non-decreasing order $w_1 \le \ldots \le w_m$, so $\widetilde w_i \ge w_i$ in both
    cases.
\end{proof}
For the \emph{rounded subset weight} in \cref{def:rounded-subset-weights}, each $i \in I$ contributes its
rounded weight $\widetilde w_i$, except for the heaviest item $\ell = \max I$, which retains its
original weight $w_\ell$. We leave the heaviest item unrounded so that the discrepancy
$|\widetilde w(I) - w(I)|$ is bounded by $w_\ell$, as stated in the following fact:
\begin{fact}
    \label{fact:subset-rounding-error}
    For all nonempty subsets $I \subseteq [m]$, we have
    \[
        w(I) \le \widetilde w(I) < w(I) + \max_{i \in I} w_i.
    \]
\end{fact}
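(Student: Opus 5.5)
The statement to prove is \Cref{fact:subset-rounding-error}: for nonempty $I\subseteq[m]$, $w(I)\le\widetilde w(I)<w(I)+\max_{i\in I}w_i$. Let $\ell=\max I$; since weights are sorted, $\max_{i\in I}w_i=w_\ell$. By \cref{def:rounded-subset-weights},
\[
\widetilde w(I)-w(I)=\sum_{i\in I\setminus\{\ell\}}(\widetilde w_i-w_i).
\]
The argument therefore reduces to bounding this sum of nonnegative rounding errors from both sides.

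\textbf{Lower bound.} Each term $\widetilde w_i-w_i$ is nonnegative by \cref{fact:rounding-only-increases-weights}, so the sum is nonnegative and $w(I)\le\widetilde w(I)$.

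\textbf{Upper bound.} Every $i\in I\setminus\{\ell\}$ satisfies $i<\ell$. Each error term is nonnegative, so
\[
\widetilde w(I)-w(I)\le\sum_{i<\ell}(\widetilde w_i-w_i).
\]
For a pair $(2k-1,2k)$, only the odd index contributes: the error of $2k-1$ is $w_{2k}-w_{2k-1}$, and the error of $2k$ is $0$. If $m$ is odd, $\widetilde w_m=w_m$ contributes $0$, and in any case $m$ is not below $\ell$.

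For odd $i<\ell$, we have $i+1\le\ell\le m$, so item $i$ lies in a genuine pair. Hence the sum equals
\[
\sum_{k:\,2k-1<\ell}(w_{2k}-w_{2k-1}).
\]
The intervals $[w_{2k-1},w_{2k}]$ are disjoint consecutive pieces of the sorted sequence, so the sum telescopes-by-domination to at most
\[
w_{2K}-w_1,
\]
where $2K$ is the largest even index with $2K-1<\ell$, so $2K\le\ell$. This gives
\[
\widetilde w(I)-w(I)\le w_\ell-w_1.
\]

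For strictness, all weights are positive integers by the preliminaries, so $w_1>0$ and $w_\ell-w_1<w_\ell$. This yields $\widetilde w(I)<w(I)+w_\ell$.

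The only delicate points are the boundary cases: when $\ell$ is odd and its partner $\ell+1$ lies outside $I$, and when $m$ is odd. Both are handled above by requiring $2K\le\ell$ and noting that the error terms vanish for even and last indices.
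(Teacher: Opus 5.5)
Your proof is correct and follows the paper's argument. Both reduce to $\sum_{i<\ell}(\widetilde w_i - w_i)$, bound it by the telescoping quantity $w_\ell - w_1$, and use $w_1>0$ for strictness; the paper just bounds each term by $w_{i+1}-w_i$ directly instead of first isolating the odd indices, and the only small fix needed is that for $\ell=1$ your expression $w_{2K}-w_1$ with $K=0$ is undefined, though there the sum is empty and the bound holds trivially.
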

\begin{proof}
    The first inequality follows from \Cref{fact:rounding-only-increases-weights}. For the second, let $\ell = \max I$. Then
    \[
        \widetilde w(I) - w(I)
         = \left(\sum_{i \in I \setminus \{\ell\}} \widetilde w_i + w_\ell\right)
            - \sum_{i \in I} w_i
        = \sum_{i \in I \setminus \{\ell\}} (\widetilde w_i - w_i)
        \le \sum_{1 \le i < \ell} (\widetilde w_i - w_i).
    \]
    By \cref{fact:rounding-only-increases-weights} and using the non-decreasing order of $w_i$,
    \[
        \sum_{1 \le i < \ell} (\widetilde w_i - w_i)
        \le \sum_{1 \le i < \ell} (w_{i + 1} - w_i)
        = w_\ell - w_1
        < w_\ell.
    \]
    Hence, $\widetilde w(I) - w(I) < w_\ell = \max_{i \in I} w_i$.
\end{proof}

Based on the rounded subset weight defined in \cref{def:rounded-subset-weights}, we next define the
collection of rounded feasible subsets $\widetilde \Omega$, which will serve as an intermediate
object in bounding $|\Omega \setminus \Gamma|$.
\begin{definition}[Rounded feasible subsets]
    \[
        \widetilde \Omega \coloneqq \{I \subseteq [m] : \widetilde w(I) \le t \}.
    \]
\end{definition}
At a high level, our proof of \Cref{lem:indexed-balancing} proceeds as follows. We first show
that any feasible subset in $\Omega$ that is not $C$-balanced remains not $(C - 1)$-balanced after
removing its heaviest item (\Cref{lem:removal-preserves-imbalance}). Moreover, under the rounding scheme,
removing the heaviest item from a subset in $\Omega \setminus \widetilde\Omega$ produces a
\emph{rounded feasible subset} in $\widetilde\Omega$ (\Cref{fact:removal-map-feasibility}). We leave subsets
already in $\widetilde\Omega$ unchanged. Viewing this operation as a map, we can argue that each
image has at most $\cO(m)$ preimages. Hence, any rounded feasible subset in $\widetilde\Omega$ that
is not $(C - 1)$-balanced has at most $\cO(m)$ counterparts in $\Omega$ that are not $C$-balanced
(\Cref{fact:removal-map-preimages}). Finally, applying \Cref{lem:rounded-balancing-probability} with parameter $C - 1$, we show
that at most an $\cO(1/m^c)$ fraction of the subsets in $\widetilde\Omega$ are not
$(C - 1)$-balanced; combining these yields
\[
    |\Omega \setminus \Gamma|
    \le \cO\Bigl(m \cdot \bigl|\{J \in \widetilde\Omega : J \text{ is not } (C - 1)\text{-balanced}\}\bigr|\Bigr)
    \le \cO\left(\frac{m}{m^c}\right)|\widetilde\Omega|
    = \cO\left(\frac{1}{m^{c - 1}}\right)|\widetilde\Omega|.
\]
Since \Cref{fact:subset-rounding-error} implies that $|\widetilde\Omega| \le |\Omega|$, we obtain
\[
    |\Omega \setminus \Gamma|
    \le \cO\left(\frac{1}{m^{c - 1}}\right)|\Omega|.
\]

To make this argument precise, we define a map $\pi:\Omega \to \widetilde\Omega$ that removes the
heaviest item from a subset in $\Omega$ when needed. This map lets us carry out the analysis in the
$\widetilde\Omega$ space, as discussed above.

\begin{definition}[Heaviest item removal]
    Define $\pi:\Omega \to \widetilde\Omega$ by
    \[
        \pi(I) =
        \begin{cases}
            I, & \text{if } I \in \widetilde\Omega, \\
            I \setminus \{\max I\}, & \text{if } I \in \Omega \setminus \widetilde\Omega
        \end{cases}.
    \]
\end{definition}
Recall that $\max I$ is the heaviest item of $I$. Note that $\pi$ does \emph{not} remove $\max I$
from $I$ when $I \in \widetilde\Omega$, whereas it removes $\max I$ when
$I \notin \widetilde\Omega$.

\begin{fact}
    \label{fact:removal-map-feasibility}
    The heaviest item removal map $\pi$ is well-defined. That is, $\pi(I) \in \widetilde \Omega$ for
    all $I \in \Omega$.
\end{fact}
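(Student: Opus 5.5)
We split into the two cases in the definition of $\pi$. If $I \in \widetilde\Omega$, then $\pi(I) = I \in \widetilde\Omega$ and there is nothing to prove. So the substance is the second case: $I \in \Omega \setminus \widetilde\Omega$, where we must show $J \coloneqq I \setminus \{\ell\} \in \widetilde\Omega$ with $\ell = \max I$.

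First, $I$ is nonempty, since $\varnothing \in \widetilde\Omega$ because $\widetilde w(\varnothing) = 0 \le t$. So $\ell$ is well defined. If $J = \varnothing$, then $\widetilde w(J) = 0 \le t$ and we are done.

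Otherwise, let $\ell' = \max J < \ell$. I would compare $\widetilde w(J)$ with $w(I)$. By definition, $\widetilde w(J) = w_{\ell'} + \sum_{i \in J \setminus \{\ell'\}} \widetilde w_i$. Applying the telescoping bound from the proof of \Cref{fact:subset-rounding-error} to $J$ gives
\[
    \widetilde w(J) - w(J) = \sum_{i \in J \setminus \{\ell'\}} (\widetilde w_i - w_i) \le \sum_{1 \le i < \ell'} (\widetilde w_{i} - w_i) \le w_{\ell'} - w_1 \le w_{\ell'}.
\]
Since $w_{\ell'} \le w_\ell$ by the sorted order, this yields $\widetilde w(J) \le w(J) + w_\ell = w(I) \le t$, because $I \in \Omega$. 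Hence $J \in \widetilde\Omega$. Equivalently, one can just invoke \Cref{fact:subset-rounding-error} on $J$ to get $\widetilde w(J) < w(J) + w_{\ell'} \le w(J) + w_\ell = w(I)$.

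The only point needing care is that the rounding error is bounded by the heaviest item of $J$, not of $I$, and this in turn requires monotonicity of the weights, $w_{\ell'} \le w_\ell$. There is no real obstacle; the proof is a direct application of \Cref{fact:subset-rounding-error}. Note also that we never need $I \notin \widetilde\Omega$ in this argument: removing the maximum item from any feasible $I$ lands in $\widetilde\Omega$.
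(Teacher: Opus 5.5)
Your proof is correct, and it takes a slightly more direct route than the paper's. The paper works through $\widetilde w(I)$. It first shows that deleting $\ell = \max I$ lowers the rounded weight by at least $w_\ell$, because the new maximum $\ell'$ switches from $\widetilde w_{\ell'}$ to $w_{\ell'} \le \widetilde w_{\ell'}$. It then applies \Cref{fact:subset-rounding-error} to $I$ itself, giving $\widetilde w(\pi(I)) \le \widetilde w(I) - w_\ell \le w(I) \le t$.

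You instead apply \Cref{fact:subset-rounding-error} to $J = \pi(I)$ and absorb the rounding error $w_{\ell'}$ via monotonicity, $w_{\ell'} \le w_\ell = w(I) - w(J)$. This skips the computation about the second heaviest item. The only cost is that you must treat $J = \varnothing$ separately, since the fact applies only to nonempty sets; the paper covers that case as $|I| = 1$ inside its claim. Both arguments rest on the same telescoping bound.
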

\begin{proof}
    It suffices to show $\pi(I) \in \widetilde \Omega$ when
    $I \in \Omega \setminus \widetilde \Omega$. Fix such an $I$. Since $I \notin \widetilde \Omega$,
    we have $I \neq \varnothing$. Let $\ell = \max I$. We claim that
    \[
        \widetilde w(I) - \widetilde w\bigl(\pi(I)\bigr) \ge w_{\ell}.
    \]
    Once this claim is proved, we have $\pi(I) \in \widetilde \Omega$ since
    \[
        \widetilde w\bigl(\pi(I)\bigr)
        \le \widetilde w(I) - w_{\ell}
        \le \bigl(w(I) + w_{\ell}\bigr) - w_{\ell}
        = w(I) \le t,
    \]
    where the second inequality follows from \Cref{fact:subset-rounding-error}, and the last follows from
    $I \in \Omega$.

    For the claim, if $|I| = 1$, then $\widetilde w(I) - \widetilde w\bigl(\pi(I)\bigr)$ equals the
    weight of the unique item in $I$, so the inequality holds trivially. Now assume $|I| \ge 2$, and
    let $\ell'$ be the second heaviest item in $I$. We have $\pi(I) = I \setminus \{\ell\}$, so
    \[
        \widetilde w(I) - \widetilde w\bigl(\pi(I)\bigr)
        = \left(w_{\ell} + \widetilde w_{\ell'}+ \sum_{i \in I \setminus \{\ell', \ell\}} \widetilde w_i\right)
        - \left(w_{\ell'} + \sum_{i \in I \setminus \{\ell', \ell\}} \widetilde w_i\right)
        = w_{\ell} + \widetilde w_{\ell'} - w_{\ell'}
        \ge w_{\ell},
    \]
    where the inequality uses \Cref{fact:rounding-only-increases-weights}.
\end{proof}

\subsection{Balancing Lemma}
As briefly mentioned earlier, we first observe that any feasible subset in $G$ that is not
$C$-balanced remains not $(C - 1)$-balanced under the heaviest-item removal map $\pi$:

\begin{lemma}
    \label{lem:removal-preserves-imbalance}
    \[
        \Omega \setminus \Gamma
        \subseteq \bigl\{I \in \Omega : \pi(I) \text{ is not } (C - 1)\text{-balanced} \bigr\}.
    \]
\end{lemma}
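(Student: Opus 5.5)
We prove the contrapositive: if $I \in \Omega$ and $\pi(I)$ is $(C-1)$-balanced, then $I$ is $C$-balanced. If $I \in \widetilde\Omega$, then $\pi(I) = I$, and $(C-1)$-balancedness trivially implies $C$-balancedness, since the same witness sets $S$ have size $\le C-1 \le C$. So the only substantive case is $I \in \Omega \setminus \widetilde\Omega$. Then $I$ is nonempty, $\ell = \max I$, and $J := \pi(I) = I \setminus \{\ell\}$. By symmetry between even and odd indices, assume $\ell$ is even, so that $I_e = J_e \cup \{\ell\}$ and $I_o = J_o$.

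We verify the two conditions of \Cref{def:indexed-balanced-subsets} for $I$ separately. Let $S_e \subseteq J_e$ and $S_o \subseteq J_o$ be the witnesses for $J$, each of size at most $C-1$, with
\[
    w(J_e \setminus S_e) \le w(J_o)
    \qquad \text{and} \qquad
    w(J_o \setminus S_o) \le w(J_e).
\]

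For the even side, which is the side that gained $\ell$, take $S := S_e \cup \{\ell\} \subseteq I_e$. Then $|S| \le C$, and
\[
    w(I_e \setminus S) = w(J_e \setminus S_e) \le w(J_o) = w(I_o).
\]

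For the odd side, which is unchanged, keep $S := S_o \subseteq I_o$ with $|S| \le C - 1 \le C$. Then
\[
    w(I_o \setminus S_o) = w(J_o \setminus S_o) \le w(J_e) \le w(J_e) + w_\ell = w(I_e),
\]
where the last inequality uses that weights are nonnegative. Hence $I$ is $C$-balanced.

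The case where $\ell$ is odd is identical with the roles of $e$ and $o$ swapped. Combining the cases, every $I \in \Omega \setminus \Gamma$ has $\pi(I)$ not $(C-1)$-balanced, which is exactly the stated inclusion.

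I do not expect a real obstacle here. The only points needing care are the direction of the argument, since we want the contrapositive, and noting that adding an item can only help the opposite side's inequality. Feasibility of $\pi(I)$ in $\widetilde\Omega$ is not needed for this lemma; it is already supplied by \Cref{fact:removal-map-feasibility}.
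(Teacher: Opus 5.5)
Your proposal is correct and is essentially the paper's argument. The paper also adds the removed item (if it lies on that side) to the witness set, and uses $w(\pi(I)_o) \le w(I_o)$ from nonnegative weights. It handles both sides uniformly via $|I_e| - |\pi(I)_e| \le 1$ inside a proof by contradiction, whereas you split explicitly on whether $I \in \widetilde\Omega$ and on the parity of $\max I$.
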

\begin{proof}
    Fix $I \in \Omega \setminus \Gamma$. By the definition of $\Gamma$ in
    \cref{def:balanced-feasible-subsets}, $I$ is not $C$-balanced, so clearly $I \neq \varnothing$. The goal
    is to prove $\pi(I)$ is not $(C - 1)$-balanced.

    Suppose for contradiction that $\pi(I)$ is $(C - 1)$-balanced. Then, by definition of
    balancedness (\cref{def:indexed-balanced-subsets}), there exists a subset $S \subseteq \pi(I)_e$ with
    $|S| \le C - 1$ such that $w\bigl(\pi(I)_e \setminus S\bigr) \le w\bigl(\pi(I)_o\bigr)$. Since
    $\pi(I)_e \subseteq I_e$ with $|I_e| - |\pi(I)_e| \le 1$ by definition of $\pi( \cdot )$, we
    know there exists $S' \subseteq I_e$ such that $I_e \setminus S' = \pi(I)_e \setminus S$ and
    $|S'| \le |S| + 1 \le C$, and consequently
    $w(I_e \setminus S') = w\bigl(\pi(I)_e \setminus S\bigr) \le w\bigl(\pi(I)_o\bigr) \le w(I_o)$.
    By a symmetric argument, there is also a subset $S'' \subseteq I_o$ of size $|S''| \le C$ such
    that $w(I_o \setminus S'') \le w(I_e)$. Together, this contradicts the assumption that $I$ is not
    $C$-balanced.

    Hence, $\pi(I)$ is not $(C - 1)$-balanced. This finishes the proof of the lemma.
\end{proof}

Next, we relate feasible subsets whose images under $\pi$ are not $(C - 1)$-balanced to rounded
feasible subsets that are not $(C - 1)$-balanced.

\begin{fact}
    \label{fact:removal-map-preimages}
    \[
            \bigl|\{I \in \Omega : \pi(I) \text{ is not } (C - 1)\text{-balanced} \}\bigr|
            \le (m + 1)
            \bigl|\{I \in \widetilde \Omega : I \text{ is not } (C - 1)\text{-balanced} \}\bigr|
    \]
\end{fact}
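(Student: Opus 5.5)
This is a preimage-counting argument for the map $\pi:\Omega\to\widetilde\Omega$, which is well defined by \Cref{fact:removal-map-feasibility}. Write
\[
    A \coloneqq \{I \in \Omega : \pi(I) \text{ is not } (C-1)\text{-balanced}\},
    \qquad
    B \coloneqq \{J \in \widetilde\Omega : J \text{ is not } (C-1)\text{-balanced}\}.
\]
By definition of $A$, every $I \in A$ satisfies $\pi(I) \in B$. Hence $\pi$ restricts to a map $A \to B$, and
\[
    |A| = \sum_{J \in B} \bigl|\pi^{-1}(J)\bigr| \le |B| \cdot \max_{J \in B} \bigl|\pi^{-1}(J)\bigr|.
\]
So it suffices to show that every $J \in \widetilde\Omega$ has at most $m+1$ preimages under $\pi$.

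Fix $J \in \widetilde\Omega$ and let $I \in \Omega$ with $\pi(I) = J$. By the definition of $\pi$ there are two cases.
\begin{itemize}
    \item If $I \in \widetilde\Omega$, then $I = \pi(I) = J$. This contributes at most one preimage.
    \item If $I \in \Omega \setminus \widetilde\Omega$, then $J = I \setminus \{\max I\}$. Hence $I = J \cup \{\ell\}$ where $\ell = \max I \notin J$. The set $I$ is therefore determined by the index $\ell \in [m] \setminus J$, which gives at most $m - |J| \le m$ preimages.
\end{itemize}
Together, $|\pi^{-1}(J)| \le 1 + m$.

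One could sharpen this by noting that $\ell$ must exceed $\max J$, but this is unnecessary for the bound. Combining with the displayed sum yields $|A| \le (m+1)|B|$, which is the claim.

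No step here is a real obstacle; the only care needed is to make sure the two cases in the definition of $\pi$ together cover every preimage.
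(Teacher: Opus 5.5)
Your proof is correct and takes the same route as the paper: count preimages under $\pi$, where $J$ itself contributes at most one and each possible removed heaviest item $\ell$ contributes at most one more, for at most $m+1$ per image. Your observation that the left-hand set is exactly the preimage under $\pi$ of the right-hand set just spells out the step that the paper's one-line proof leaves implicit.
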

\begin{proof}
    For each $I \in \widetilde \Omega$ that is not $(C - 1)$-balanced, there are at most $m + 1$
    feasible subsets in $\Omega$ that map to $I$ under $\pi$: one such set is $I$ itself, and there
    is at most one additional such set for each possible choice of the heaviest item that could have
    been removed.
\end{proof}

Now, we show that at most an $\cO(1/m^{c - 1})$ fraction of the rounded feasible subsets in
$\widetilde\Omega$ are not $(C - 1)$-balanced.
\begin{lemma}
    \label{lem:rounded-balancing-probability}
    For any $c > 1$ and any integer $C \ge \D$, the probability that a uniformly random
    $J \in \widetilde \Omega$ is not $C$-balanced is at most $4/m^{c}$.
\end{lemma}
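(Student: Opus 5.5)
Throughout, the proof is conditional on everything about $J$ except the within-pair choices. For $J \in \widetilde\Omega$ nonempty, let $\ell = \max J$ and let $p(\ell)$ be the pair $\{2k-1,2k\}$ containing $\ell$. Define the \emph{signature} of $J$ to consist of: the item $\ell$; the set $J \cap p(\ell)$; and, for every other complete pair $\{2k-1,2k\}$ with $2k < \ell$, the count $|J \cap \{2k-1,2k\}| \in \{0,1,2\}$.

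\textbf{Step 1: the swaps are measure-preserving.} For pairs with count exactly one, swapping which element is chosen keeps every element below $\ell$, so $\max J$ is unchanged. It also leaves $\widetilde w(J)$ unchanged, since both elements of the pair have rounded weight $w_{2k}$. So the swaps preserve membership in $\widetilde\Omega$. Conditioned on the signature, $J$ is therefore uniform over the $2^{N}$ choices, where $N$ is the number of singly-occupied pairs. Equivalently, each such pair independently contributes its odd element or its even element with probability $1/2$ each. It suffices to prove that, for every fixed signature, the conditional failure probability is at most $4/m^c$. By symmetry, it is enough to handle the first condition of the definition (the even side) with failure probability at most $2/m^c$.

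\textbf{Step 2: split into heavy and light parts.} If $|J| \le C$, take $S = J_e$ and the condition holds trivially. Note that $|J|$ is determined by the signature. Otherwise, choose an odd index $h$ determined by the signature such that the suffix $[h,m]$ contains at least $C-1$ and at most $C$ selected items. This is possible because any pair contributes at most $2$ items, and we move the boundary in steps of whole pairs. Set $H = J \cap [h,m]$ and $L = J \cap [1,h-1]$. Then $|H| \le C$; every item of $H$ weighs at least $w_h$; every item of $L$ weighs at most $w_{h-1} \le w_h$. Take $S = H_e$. The goal becomes
\[
    w(L_e) - w(L_o) \le w(H_o).
\]

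\textbf{Step 3: bound the light discrepancy.} Decompose $w(L_e) - w(L_o)$ into three contributions:
\begin{itemize}
    \item Doubly-occupied pairs in $L$ contribute $w_{2k} - w_{2k-1} \ge 0$ each. Their total telescopes to at most $w_h$.
    \item Singly-occupied pairs contribute independent terms $\pm$ a value of size at most $w_{2k}$. The mean of each term is $(w_{2k} - w_{2k-1})/2$, which also telescopes to at most $w_h$.
    \item The remaining centered terms are bounded by $w_h$ and there are at most $m/2$ of them. Hoeffding's inequality gives a deviation above $w_h\sqrt{m c\ln m}$ with probability at most $m^{-c}$.
\end{itemize}
Hence, with probability at least $1 - m^{-c}$,
\[
    w(L_e) - w(L_o) \le w_h\bigl(2 + \sqrt{c\, m \ln m}\bigr).
\]

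\textbf{Step 4: lower-bound the heavy odd side.} We need
\[
    w(H_o) \ge w_h \cdot |H_o|,
\]
so it suffices that $|H_o| \ge 2 + \sqrt{c\, m\ln m}$. The pairs inside $[h,m]$ are of three kinds:
\begin{itemize}
    \item doubly-occupied pairs, each of which deterministically gives one odd item;
    \item singly-occupied pairs other than $p(\ell)$, each of which gives an odd item independently with probability $1/2$;
    \item the pair $p(\ell)$, which is fixed by the signature.
\end{itemize}
So $|H_o| \ge D + \mathrm{Bin}(s, 1/2)$, where $2D + s \ge |H| - 2 \ge C - 3$. If $s$ is small, $D$ is already large. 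Otherwise, a Chernoff/Hoeffding bound shows $\mathrm{Bin}(s,1/2) \ge s/2 - \sqrt{c\, s \ln m}$ except with probability at most $m^{-c}$. With $C \ge 32c\sqrt{m\log m} + 1$, we get
\[
    |H_o| \ge (C-3)/2 - \sqrt{c\, m\ln m} \ge 2 + \sqrt{c\, m \ln m}.
\]
This step needs care with the constant $32$ and with $\log$ versus $\ln$, and it needs $c > 1$ so that the square-root term dominates.

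\textbf{Step 5: combine.} A union bound over Steps 3 and 4 gives failure probability at most $2/m^c$ for the even-side condition. By symmetry the odd side fails with probability at most $2/m^c$ as well, for a total of at most $4/m^c$.

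\textbf{Main obstacle.} The delicate part is choosing $h$ and handling the unpaired or fixed items: the pair $p(\ell)$, the possibility that $m$ is odd, and the fact that $h$ must be a function of the signature alone, so that the swaps do not move $H$. The constant bookkeeping between Step 3 and Step 4 is the other point where the argument could fail if done carelessly.
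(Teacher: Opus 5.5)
Your proposal is correct and follows essentially the same route as the paper. Conditioning on your signature is the same as the paper's use of the classes $\mathcal{C}_{\ell,\mathbf{c}}$ through the random pair-swap procedure $\mathcal{M}$, your odd threshold $h$ coincides with the paper's $h = 2\lceil \min H'/2\rceil - 1$, and you handle the light and heavy parts with the same telescoping-plus-Hoeffding and binomial-Chernoff bounds. The remaining differences are cosmetic: you compare against $(2+\sqrt{c\,m\ln m})\,w_h$ instead of $(1+|H'|/16)\,w_h$, and your explicit union bound over the two balancing conditions is slightly cleaner than the paper's case split on $w(J_e)>w(J_o)$. Your constant check $C \ge 7 + 4\sqrt{c\,m\ln m}$ does hold for $c>1$ and $m\ge 2$.
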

\begin{proof}
    Our proof for \cref{lem:rounded-balancing-probability} proceeds in the following steps:
    \begin{itemize}
        \item We first introduce a random modification procedure $\mathcal{M}$ which, given a
            $J' \in \widetilde \Omega$, returns a random
            $J = \mathcal{M}(J') \in \widetilde \Omega$.
        \item We show in \cref{claim:pair-swap-uniformity} that, if $J'$ is drawn from the uniform distribution
            over $\widetilde \Omega$, then the resulting $J = \mathcal{M}(J')$ remains uniformly
            distributed over $\widetilde \Omega$.
        \item Then we show that, for any fixed $J' \in \widetilde \Omega$, the probability (over the
            randomness of $\mathcal{M}$) that $\mathcal{M}(J')$ is not $C$-balanced is at most
            $4/m^c$.
        \item Combining the above two facts, we immediately conclude that a uniformly random member
            of $\widetilde \Omega$ is not $C$-balanced with probability at most $4/m^c$, finishing
            the proof.
    \end{itemize}

    Now we define the random modification procedure $\mathcal{M}$. Denote
    $K(\ell) = \lfloor (\ell - 1)/2\rfloor$ for short for any $\ell \in [m]$. Let
    $J' \in \widetilde \Omega$ be given. If $J' = \varnothing$, do nothing. Otherwise, for each
    $k \in [K(\max J')]$, do the following independently:
    \begin{enumerate}
        \item if exactly one of $\{2k - 1, 2k\}$ lies in $J'$, then with probability $1/2$ replace
            it with the other element in the pair;
        \item otherwise, either both or neither members of the pair $\{2k - 1, 2k\}$ lie in $J'$, in
            which case we do nothing.
    \end{enumerate}
    Return the resulting set as the output of $\mathcal{M}(J')$. When $J'$ is nonempty,
    $2K(\max J') < \max J'$, so we never modify its heaviest item $\max J'$.

    \begin{claim}
        \label{claim:pair-swap-uniformity}
        If $J'$ is drawn from the uniform distribution over $\widetilde \Omega$, then the set $J$
        returned by the random modification procedure $\mathcal{M}(J')$ is also uniformly
        distributed over $\widetilde \Omega$. Moreover, $|J| = |J'|$ and
        $\widetilde w(J) = \widetilde w(J')$ always hold.
    \end{claim}
    \begin{proof}
        Let $J$ be the output of this modification. First, it is clear that $|J| = |J'|$ since we
        replace items one for one. If $J' = \varnothing$, then $J = \varnothing$ and
        $\widetilde w(J) = \widetilde w(J') = 0$. Otherwise, by \Cref{def:rounded-subset-weights}, we
        have
        \[
            \widetilde w(J)
            = w_{\max J} + \sum_{j \in J \setminus \{\max J\}} \widetilde w_j
            = w_{\max J'} + \sum_{j \in J' \setminus \{\max J'\}} \widetilde w_j
            = \widetilde w(J')
        \]
        where the second equality follows from the fact that
        $\widetilde w_{2k - 1} = \widetilde w_{2k}$ and we did not modify the heaviest item in $J'$.
        Consequently, we also have $J \in \widetilde \Omega$.

        It remains to verify that the random modification preserves the uniform distribution on
        $\widetilde\Omega$. For any nonempty $X \in \widetilde \Omega$ and each
        $k \in [K(\max X) + 1]$, define $c_k(X) \in \{0, 1, 2\}$ as the number of elements of the
        pair $(2k - 1, 2k)$ contained in $X$. Then, for each $\ell \in [m]$ and each sequence
        $\mathbf{c} \in \{0, 1, 2\}^{K(\ell) + 1}$, define the equivalence class
        \[
            \mathcal{C}_{\ell, \mathbf{c}} =
            \bigl\{X \in \widetilde\Omega :
            \max X = \ell \ \text{and}\ c_k(X) = \mathbf{c}_k
            \ \text{for all } k \in [K(\ell) + 1]
            \bigr\}.
        \]
        The nonempty classes $\mathcal{C}_{\ell, \mathbf{c}}$ partition
        $\widetilde \Omega \setminus \{\varnothing\}$, and $\{\varnothing\}$ forms a separate class.
        Since $J'$ is uniform on $\widetilde\Omega$, this means
        $\Pr[J' \in \mathcal{C}_{\ell, \mathbf{c}}] = |\mathcal{C}_{\ell, \mathbf{c}}|/|\widetilde\Omega|$.
        Moreover, by definition of the random modification procedure $\mathcal{M}(J')$, it uniformly
        randomly toggles those pairs $(2k - 1, 2k)$ with $c_k(J') = 1$ and $k \in [ K(\max J')]$,
        leaving $\max J'$ and $c_k(J')$ unchanged for all $k \in [K(\max J') + 1]$; hence, it maps
        $J'$ to a uniformly random member of the same class containing $J'$. In other words,
        conditioned on $J' \in \mathcal{C}_{\ell, \mathbf{c}}$, $\mathcal{M}(J')$ returns a
        uniformly random $J \in \mathcal{C}_{\ell, \mathbf{c}}$. Therefore, for any nonempty
        $X \in \widetilde\Omega$,
        \[
            \Pr[J = X]
            = \sum_{\ell, \mathbf{c}: X \in \mathcal{C}_{\ell, \mathbf{c}}}
            \Pr\bigl[J' \in \mathcal{C}_{\ell, \mathbf{c}}\bigr]
            \Pr\bigl[J = X \mid J' \in \mathcal{C}_{\ell, \mathbf{c}}\bigr]
            = \sum_{\ell, \mathbf{c}: X \in \mathcal{C}_{\ell, \mathbf{c}}}
            \frac{|\mathcal{C}_{\ell, \mathbf{c}}|}{|\widetilde\Omega|}
            \cdot \frac{1}{|\mathcal{C}_{\ell, \mathbf{c}}|}
             = \frac{1}{|\widetilde\Omega|},
        \]
        where the first and the last equalities follow from the fact that the nonempty classes
        $\mathcal{C}_{\ell, \mathbf{c}}$ partition $\widetilde \Omega \setminus \{\varnothing\}$.
        Since $\mathcal{M}$ preserves cardinality, we also have
        $\Pr[J = \varnothing] = \Pr[J' = \varnothing] = 1/|\widetilde\Omega|$. Hence, $J$ is uniform
        on $\widetilde\Omega$.
    \end{proof}

    It remains to show that the modified rounded feasible subset is not $C$-balanced with small
    probability. Fix an integer $C \ge \D$. Given a fixed $J' \in \widetilde\Omega$, run the random
    modification procedure $\mathcal{M}(J')$ to obtain $J$. We may assume that $|J'| = |J| > C$, as
    otherwise $J$ is trivially $C$-balanced. We consider the case $w(J_e) > w(J_o)$; the other case
    is analogous. We introduce some definitions needed for our analysis:
    \begin{itemize}

        \item \textbf{Largest even index $\ell_e$.} Let
            $\ell_e \coloneqq 2\lfloor (\max J' - 1)/2\rfloor$. By definition of our procedure
            $\mathcal{M}$, indices in $J'$ larger than $\ell_e$ are not modified and so they are
            always contained in $J$.

        \item \textbf{Randomized/deterministic region $R$ and $D$.} Define $R \subseteq [m]$ to be
            the set of indices $2k - 1$ and $2k$ whenever $c_k(J') = 1$ and $k \in [\ell_e/2]$.
            Intuitively, $R$ consists of those indices that may be modified by the random procedure.
            Define $D \coloneqq [m] \setminus R$. Note that $D$ is the set of indices $2k - 1$ and
            $2k$ with $k \in [\ell_e/2]$ and either $c_k(J') = 0$ or $c_k(J') = 2$, as well as all
            indices that exceed $\ell_e$. Moreover, indices in $D$ are untouched by the random
            modification, so $J \cap D = J' \cap D$.

        \item \textbf{Threshold $h$ and Heavy/Light Items.} Let $H' \subseteq J'$ be the set of the
            $C - 1$ heaviest items in $J'$ (ties broken by largest index) and define
            $h := 2\lceil \min H'/2\rceil - 1$. Note that $|H'| = C - 1$, since $|J'| = |J| > C$.
            Then, $|J' \cap [h, m]| \le |H'| + 1 = C$.

        For brevity, let
        \[
            J_{[h, m]} := [h, m] \cap J \qquad \text{and} \qquad J_{[1, h - 1]} := [1, h - 1] \cap J.
        \]
        Since the weights are in non-decreasing order, $ w_j \ge w_h$ for all $j \in J_{[h, m]}$ and
        $w_j \le w_h$ for all $j \in J_{[1, h - 1]}$. Also, since $h$ is odd, by definition of the
        random modification procedure it holds that $|J_{[h, m]}| = |J' \cap [h, m]| \le C$.
        Furthermore, define the set of even indices
        \[
            J^{e}_{[1, h - 1]} := [1, h - 1] \cap J_e \qquad \text{ and } \qquad J^{e}_{[h, m]} := [h, m] \cap J_e
        \]
        and the odd indices
        \[
            J^{o}_{[1, h - 1]} := [1, h - 1] \cap J_o \qquad \text{ and } \qquad J^{o}_{[h, m]} := [h, m] \cap J_o
        \]
    \end{itemize}

    Now, we remove the set $J^{e}_{[h, m]} \subseteq J_e$ from $J_e$ and show that
    \[
        w\bigl(J_e \setminus J_{[h, m]}^{e}\bigr) \le w(J_o)
    \]
    with probability at least $1 - 2/m^c$ over the random modification procedure. Since
    $J_o = J^{o}_{[h, m]} \sqcup J^{o}_{[1, h - 1]}$ and
    $J_e = J^{e}_{[h, m]} \sqcup J^{e}_{[1, h - 1]}$, the above inequality is equivalent to
    \begin{equation}
        \label{eq:heavy-light-balancing-target}
        \sum_{j \in J^{e}_{[1, h - 1]}} w_j - \sum_{j \in J^{o}_{[1, h - 1]}} w_j \le \sum_{j \in J^{o}_{[h, m]}} w_j.
    \end{equation}
    Note that $|J^{e}_{[h, m]}| \le |J_{[h, m]}| \le C$, so proving \Cref{eq:heavy-light-balancing-target} with
    probability at least $1 - 2/m^c$ suffices. We will prove \cref{eq:heavy-light-balancing-target} by separately bounding
    the left-hand side and the right-hand side.

    \begin{claim}
        \label{claim:light-part-discrepancy}
        \[
            \sum_{j \in J^{e}_{[1, h - 1]}} w_j - \sum_{j \in J^{o}_{[1, h - 1]}} w_j
                < \Bigl(1 + \frac{|H'|}{16}\Bigr)w_h
        \]
        with probability at least $1 - m^{- c}$.
    \end{claim}
    \begin{proof}
        We rewrite the difference in terms of sums over $R$ and $D$ as
        \[
            \sum_{j \in J^{e}_{[1, h - 1]}} w_j - \sum_{j \in J^{o}_{[1, h - 1]}} w_j
            = \sum_{j \in J_{[1, h - 1]}} ( - 1)^j w_j
            = \underbrace{\sum_{j \in J_{[1, h - 1]} \cap D} ( - 1)^j w_j}_{S_D}
            + \underbrace{\sum_{j \in J_{[1, h - 1]} \cap R} ( - 1)^j w_j}_{S_R}.
        \]
        Recall $h - 1$ is even. To bound $S_D$, recalling the definition of $D$, we know that for
        every pair $\{2k - 1, 2k\}$, either both elements are contained in $J_{[1, h - 1]} \cap D$
        and contribute $w_{2k} - w_{2k - 1}$ to $S_D$, or neither is contained, contributing $0$ to
        $S_D$. Hence, by the monotonicity of the weights (using a similar argument to that in
        \cref{fact:subset-rounding-error}),
        \begin{align*}
            S_D
            \le \sum_{\{2k - 1, 2k\} \subseteq J_{[1, h - 1]} \cap D} \bigl(w_{2k} - w_{2k - 1}\bigr)
            \le \sum_{\substack{1 \le j < h \\ j\ \text{odd}}} (w_{j + 1} - w_j)
            & \le \sum_{\substack{1 \le j < h}} (w_{j + 1} - w_j) \\
            &= w_h - w_1 < w_h.
        \end{align*}

        Now we bound $S_R$. By definition of $R$ and our random modification procedure,
        $J_{[1, h - 1]} \cap R$ independently includes exactly one random element from every pair
        $\{2k - 1, 2k\}$ where $k \in K' := \{k \in [m] : 2k < h, \text{ and } c_k(J') = 1\}$. Thus,
        we can write $S_R = \sum_{k \in K'} X_k$, where $X_k$ are random variables distributed as
        \[
            X_k =
            \begin{cases}
                w_{2k} & \text{ with probability $1/2$} \\
                - w_{2k - 1} & \text{ with probability $1/2$}
            \end{cases}
        \]
        independently across $k \in K'$. Then,
        \begin{align*}
            \mathbb{E}[S_R]
            = \sum_{k \in K'} \mathbb{E}[X_k]
            = \frac{1}{2} \sum_{k \in K'} (w_{2k} - w_{2k - 1})
            \le \frac{1}{2} \sum_{2k < h} \bigl(w_{2k} - w_{2k - 1}\bigr)
            & \le \frac{1}{2} \sum_{\substack{1 \le j < h \\ j\ \text{odd}}}
            \bigl(w_{j + 1} - w_{j}\bigr)\\
            &< \frac{w_h}{2}.
        \end{align*}
        Since $w_j \le w_h$ for all $j \in J_{[1, h - 1]}$, we have $X_k \in [ - w_h, w_h]$. If
        $K' = \varnothing$, then $S_R = 0$ and the desired bound holds trivially. Otherwise, by
        Hoeffding's inequality,
        \begin{align*}
            \Pr\Bigl[S_R > \frac{|H'|}{16}w_h\Bigr]
            &= \Pr\Bigl[
                S_R-\mathbb E[S_R]
                > \frac{|H'|}{16}w_h-\mathbb E[S_R]
            \Bigr] \\
            &\le \Pr\Bigl[
                S_R-\mathbb E[S_R]
                > \frac{|H'|-8}{16}w_h
            \Bigr] \\
            &\le \exp\left(
                -\frac{2\bigl(w_h(|H'|-8)/16\bigr)^2}
                    {4|K'|w_h^2}
            \right).
        \end{align*}
        where we used $|H'| = C - 1 \ge 32c\sqrt{m\log m} > 8 + \sqrt{2} \cdot 16c \sqrt{m \log m}$.
        Therefore, with probability at least $1 - m^{- c}$,
        \[
            S_D + S_R
            < w_h + \frac{|H'|}{16} w_h
            = \Bigl(1 + \frac{|H'|}{16}\Bigr) w_h.
        \]
        This finishes the proof of \cref{claim:light-part-discrepancy}.
    \end{proof}

    \begin{claim}
        \[
            \sum_{j \in J^{o}_{[h, m]}} w_j \ge \Bigl(1 + \frac{|H'|}{16}\Bigr) w_h
        \]
        with probability at least $1 - m^{- c}$.
    \end{claim}
    \begin{proof}
        Using the fact that $w_j \ge w_h$ holds for all $j \in [h, m]$, it suffices to show that
        $|J_{[h, m]}^o| \ge 1 + |H'|/16$ holds with probability at least $1 - m^{- c}$.

        Let
        \[
            K'' := \bigl\{k \in \mathbb{Z} : h < 2k \le \ell_e \text{ and } c_k(J') \neq 0 \bigr\}.
        \]
        Recalling the definitions of $h, \ell_e$, we have
        $2|K''| \ge |J' \cap [h, \ell_e]| \ge |J' \cap [h, m]| - 2 \ge |H'| - 2$. By definition of
        the random modification procedure, independently for every $k \in K''$, the event
        $2k - 1 \in J^o_{[h, m]}$ happens either with probability $1/2$ if $c_k(J') = 1$ , or with
        probability $1$ if $c_k(J') = 2$. Hence, the random variable $|J^o_{[h, m]}|$ stochastically
        dominates the binomial distribution $\mathrm{Bin}(|K''|, 1/2)$. Therefore, by Chernoff's
        bound:
        \begin{align*}
            \Pr \bigl[ |J^o_{[h, m]}| < 1 + |H'|/16 \bigr]
            \le \Pr \bigl[ |J^o_{[h, m]}| \le (|H'| - 2) /8 \bigr]
            & \le \Pr \bigl[ |J^o_{[h, m]}| \le |K''| /4 \bigr] \\
            & \le \exp \bigl(- |K''|/8\bigr) \\
            & \le \exp \bigl(- (|H'| - 2)/16\bigr) \\
            & \le \exp \bigl(- c\sqrt{m \log m}\bigr) \\
            & \le \exp \bigl(- c \log m\bigr) \\
            & = m^{- c},
        \end{align*}
        where we used $|H'| = C - 1 \ge \E$. Therefore, with probability at least $1 - m^{- c}$, the
        claimed bound holds.
    \end{proof}

    Combining the bounds from the two claims, we obtain \Cref{eq:heavy-light-balancing-target} with probability at least
    $1 - 2m^{- c}$:
    \[
        \sum_{j \in J^{e}_{[1, h - 1]}} w_j - \sum_{j \in J^{o}_{[1, h - 1]}} w_j
        < \Bigl(1 + \frac{|H'|}{16}\Bigr) w_h
        < \sum_{j \in J^{o}_{[h, m]}} w_j
    \]
    as desired. The case $w(J_o) > w(J_e)$ is analogous. Hence, by a union bound, we conclude that
    $J$ is $C$-balanced with probability at least $1 - 4m^{- c}$. By the discussion at the beginning
    of the proof of \cref{lem:rounded-balancing-probability}, this proof is complete.
\end{proof}

Finally, we combine \Cref{lem:removal-preserves-imbalance}, \Cref{fact:removal-map-preimages}, and
\Cref{lem:rounded-balancing-probability} to prove \Cref{lem:indexed-balancing}.

\balancingreformulated*
\begin{proof}
    We show that $|\Omega \setminus \Gamma| \le 8|\Omega|/m^{c - 1}$. By
    \Cref{lem:removal-preserves-imbalance} and \Cref{fact:removal-map-preimages},
    \begin{align*}
        |\Omega \setminus \Gamma|
        \le \bigl|\{I \in \Omega : \pi(I) \text{ is not } (C - 1)\text{-balanced} \}\bigr|
        \le (m + 1) \bigl|\{I \in \widetilde \Omega : I \text{ is not } (C - 1)\text{-balanced} \}\bigr|.
    \end{align*}
    Recall that $(C - 1) = (\G - 1) \ge 32c\sqrt{|G|\log |G|} + 1$. By \Cref{lem:rounded-balancing-probability}, for a
    uniformly random $X \in \widetilde{\Omega}$, the probability that $X$ is not $(C - 1)$-balanced
    is at most $4/m^{c}$. Hence,
    \[
        \begin{aligned}
            |\Omega \setminus \Gamma|
            \le (m + 1) \bigl|\{I \in \widetilde \Omega : I \text{ is not } (C - 1)\text{-balanced} \}\bigr|
            \le \frac{4(m + 1)}{m^c}|\widetilde \Omega|
            \le \frac{8}{m^{c - 1}}|\widetilde \Omega|.
        \end{aligned}
    \]
    By \Cref{fact:subset-rounding-error}, we conclude
    \[
        |\Omega \setminus \Gamma|
        \le \frac{8}{m^{c - 1}}|\widetilde \Omega|
        \le \frac{8}{m^{c - 1}} |\Omega|.
    \]
\end{proof}

\section{FPTAS for \#Knapsack}
\label{sec:fptas}
In this section, we discuss our FPTAS for \#Knapsack. For an integer $t \ge 0$, recall that
$\Omega[t] := \{I \subseteq [n] : w(I) \le t \}$ is defined in \Cref{sec:preliminaries}. Our goal here is to
approximate $|\Omega[T]|$ within a factor of $(1 \pm \eps)$ in time $\tcO(n^{2.4}/\eps^{1.4})$.
Before presenting our algorithm, we define some count functions and briefly summarize the
partition-and-convolve framework of~\cite{GMW18} on which our algorithm and analysis rely.

\subsection{Count and Approximation Function}
\begin{definition}[Support of a function]
    For a function $f \colon \mathbb{Z}_{\ge 0} \to \mathbb{Z}_{\ge 0}$, the support of $f$ is
    \[
        \supp(f) \coloneqq \bigl\{x \in \mathbb{Z}_{\ge 0} : f(x) \ne 0\bigr\}.
    \]
\end{definition}
Note that the support of $f$ is the subset of its domain where $f$ is nonzero. We refer to the
\emph{size} of $f$ as $|\supp(f)|$; this is natural since we need only to store the pairs
$(x, f(x))$ with $f(x) > 0$. We assume that $\supp(f)$ is always sorted.

\begin{definition}[Count function]
    \label{def:count-function}
    For any item group $G \subseteq [n]$, the count function of $G$ is
    $f_G \colon \mathbb{Z}_{\ge 0} \to \mathbb{Z}_{\ge 0}$, defined
    \[
        f_G(t) \coloneqq \bigl|\{I \subseteq G : w(I) = t \}\bigr|.
    \]
\end{definition}
In other words, $f_G(t)$ is the number of subsets of $G$ whose total weight is \emph{exactly} $t$.
When $G = [n]$, we denote its count function simply by $h$ to distinguish it from others.

\begin{definition}[Prefix sum]
    For a function $f \colon \mathbb{Z}_{\ge 0} \to \mathbb{Z}_{\ge 0}$, the prefix sum of $f$ is
    \[
        f^{\le}(t) \coloneqq \sum_{0 \le x \le t} f(x).
    \]
\end{definition}
Unless otherwise specified, a superscript $\le$ denotes the prefix sum. Note that if $f_G$ is the
count function of $G \subseteq [n]$, then $f_G^{\le}(t) = |\Omega_G[t]|$. Hence, $h^{\le}(T)$ equals
the \emph{exact} number of feasible subsets of $[n]$ (i.e., $h^{\le}(T) = |\Omega[T]|$) and we want
to approximate it efficiently.

\begin{definition}[$(1 + \eps, 1 - \gamma)$-sum-approximation]
    \label{def:sum-approximation}
    For $\eps, \gamma \ge 0$, a function $\tilde f \colon \mathbb{Z}_{\ge 0} \to \mathbb{Z}_{\ge 0}$
    is a $(1 + \eps, 1 - \gamma)$-sum-approximation of a function
    $f \colon \mathbb{Z}_{\ge 0} \to \mathbb{Z}_{\ge 0}$ if
    \[
        (1 - \gamma) f^{\le}(x) \le \widetilde f^{\le}(x) \le (1 + \eps) f^{\le}(x)
    \]
    for all $x \in \mathbb{Z}_{\ge 0}$.
\end{definition}
Note that $\widetilde f(x)$ does not necessarily approximate $f(x)$ but its prefix sum approximates
the prefix sum of $f$. Unless otherwise specified, we use a tilde to indicate a sum-approximation.
When $\gamma = 0$, we simply say that $\tilde f$ is a $(1 + \eps)$-sum-approximation of $f$. When
$\eps = \gamma$, we simply say that $\widetilde f$ is a $(1 \pm \eps)$-sum-approximation of $f$.

Note that if $\widetilde h$ is a $(1 \pm \eps)$-sum-approximation of the count function $h$, then
\[
    (1 - \eps)|\Omega[t]| = (1 - \eps)h^{\le}(t) \le \widetilde h^{\le}(t) \le (1 + \eps) h^{\le}(t)
        = (1 + \eps) |\Omega[t]| \qquad \forall t \ge 0.
\]
Hence, the value \(\widetilde h^{\le}(T)\) is a \((1 \pm \eps)\)-approximation to \(|\Omega[T]|\).

\begin{definition}[Function convolution]
    Given two functions $f \colon \mathbb{Z}_{\ge 0} \to \mathbb{Z}_{\ge 0}$ and
    $g \colon \mathbb{Z}_{\ge 0}\to \mathbb{Z}_{\ge 0}$, their convolution is the function
    $f \star g \colon \mathbb{Z}_{\ge 0}\to \mathbb{Z}_{\ge 0}$ defined by
    \[
        (f \star g)(t) = \sum_{x + y = t} f(x) g(y).
    \]
\end{definition}
The running time to compute $(f \star g)(t)$ for all $t \ge 0$ is
$\tcO(|\supp(f)| \cdot |\supp(g)|)$: for each $x \in \supp(f)$ and $y \in \supp(g)$, add the nonzero
product $f(x)g(y)$ to the entry at $t = x + y$ in the representation of $f \star g$.

\subsection{Partition-and-Convolve Framework}
The high-level idea in~\cite{GMW18} for computing $\widetilde h$ is as follows. Recursively
partition the items $[n]$ into halves until we obtain a full binary recursion tree of depth $\log p$
with $p$ leaf groups. For each leaf group, we apply a simplified version of Halman’s algorithm
(\Cref{lem:halman-leaf-algorithm}) to compute a sum-approximation with small support size\footnote{More precisely,
the size is $\widetilde O(|G|/\delta)$, where $G$ is the item set at that recursion step and
$\delta$ is the sparsification parameter.}. Then, proceeding bottom-up, we repeatedly convolve the
sum-approximations (\Cref{lem:full-convolution-approximation}) corresponding to the two children of a recursion node to
obtain a sum-approximation for the parent node. However, the brute-force convolution is quadratic in
the support sizes, and each convolution produces an output whose support size is quadratic in those
of the inputs. Thus, after each convolution, they sparsify the result (\Cref{lem:sparsify-count-function}) to
keep the support size small, at the cost of a controlled approximation error. Repeating this
convolution–sparsification step up the recursion tree eventually gives a sum-approximation for $h$
at the root, corresponding to $[n]$. With carefully chosen sparsification parameters, this yields an
overall running time of $\tcO\bigl(n^{2.5}/\eps^{1.5}\bigr)$ while preserving a final $(1 + \eps)$
relative error. More precisely, they prove the following three lemmas, which we also use.

\begin{lemma}[Sparsification of a sum-approximation in~\cite{GMW18} (Section~2)]
    \label{lem:sparsify-count-function}
    Let $G \subseteq [n]$, and let $f_G$ be its count function. Suppose we are given a
    $((1 + \eps), (1 - \gamma))$-sum-approximation $\widetilde f$ of $f_G$. Then, for any
    $\delta > 0$, there exists an algorithm that runs in $\tcO(|\supp(\widetilde f)| + |G|/\delta)$
    time and outputs a function $\widetilde g \colon \mathbb{Z}_{\ge 0} \to \mathbb{Z}_{\ge 0}$ such
    that
    \begin{enumerate}
        \item $\widetilde g$ is a $((1 + \delta)(1 + \eps), (1 - \gamma))$-sum-approximation of
            $f_G$,
        \label{item:sparsification-approximation-factor}
        \item the support of $\widetilde g$ has size $\widetilde O(|G|/\delta)$,
        \label{item:sparsification-support-size}
        \item if $\supp(\widetilde g) = \{b_1, \ldots, b_m\}$ with $b_1 < \cdots < b_m$, then
        \[
            \widetilde g^{\le}(b_i) \ge (1 + \delta) \widetilde g^{\le}(b_{i - 1}), \qquad \forall 2
                \le i \le m.
        \]
        \label{item:sparsification-geometric-growth}
    \end{enumerate}
\end{lemma}

We refer to $\delta$ as the \emph{sparsification parameter}. We also call the algorithm’s output the
\emph{$\delta$-sparsified} $((1 + \delta)(1 + \eps), (1 - \gamma))$-sum-approximation of $f_G$.
Intuitively, $\delta$-sparsification reduces the support size at the cost of an extra multiplicative
factor $(1 + \delta)$ by discretizing the function: the image of $\widetilde g^{\le}$ is divided
into geometric levels $(1 + \delta)^0, (1 + \delta)^1, \ldots, (1 + \delta)^k$, which in turn
determine the elements in $\supp(\widetilde g)$. To achieve this, the algorithm scans
$\supp(\widetilde f)$ and selects the first capacity-index at which $\widetilde f^{\le}$ reaches
each successive power of $(1 + \delta)$. Since $G$ has at most $2^{|G|}$ feasible subsets, the
resulting function has support size
\[
    \log_{1 + \delta} 2^{|G|} = \tcO\left(\frac{|G|}{\delta}\right),
\]
which gives \Cref{item:sparsification-support-size}. This construction also immediately implies
\Cref{item:sparsification-approximation-factor,item:sparsification-geometric-growth}.

\begin{lemma}[Simplified Halman's algorithm
in~\cite{GMW18} (Section~3)]
    \label{lem:halman-leaf-algorithm}
    Given $\eps\in(0,1/3)$ and $n$ weights, there is an
    algorithm that computes an $(\eps/3)$-sparsified
    $(1\pm\eps)$-sum-approximation of $h$ with support
    size $\widetilde O(n/\eps)$ in $O(n^3/\eps)$ time.
    \footnote{
    Run the simplified Halman algorithm of~\cite{GMW18}
    with error parameter $\eps/3$, and then apply
    $(\eps/3)$-sparsification using
    \Cref{lem:sparsify-count-function}.
    The resulting upper approximation factor is at most
    $(1+\eps/3)^2\le 1+\eps$, while the lower factor is
    at least $1-\eps$.
    Both steps take $O(n^3/\eps)$ total time, and the
    final support size is $\widetilde O(n/\eps)$.
    }
\end{lemma}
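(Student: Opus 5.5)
The statement to prove is \Cref{lem:halman-leaf-algorithm}. I would derive it essentially as a black-box composition of the simplified Halman algorithm of~\cite{GMW18} with the sparsification step of \Cref{lem:sparsify-count-function}.

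\textbf{Step 1: a coarse sum-approximation.} I would invoke the simplified Halman algorithm of~\cite{GMW18} (their Section~3) with error parameter $\eps/3$. This algorithm processes items one at a time. It maintains a sum-approximation of the count function of $[i]$ by a staircase with $O(n/\eps)$ breakpoints. Adding item $i+1$ amounts to combining the staircase with its shift by $w_{i+1}$, and then re-sparsifying to geometric levels of ratio $(1+\eps/(3n))$-ish. The result is a function $\widetilde f$ with
\[
(1-\eps/3)h^{\le}(x)\le\widetilde f^{\le}(x)\le(1+\eps/3)h^{\le}(x)
\]
for all $x$, computed in $O(n^3/\eps)$ time. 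I expect this step to be the only substantive one. The per-step error must be set to about $\eps/(3n)$ so that the compounded error over $n$ steps stays within $1\pm\eps/3$, and this choice is exactly where the $n^3/\eps$ cost arises. Since the paper cites~\cite{GMW18} for this, I would take their guarantee as given rather than reprove it; if a reproof were needed, I would argue the error accumulation by induction on $i$.

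\textbf{Step 2: sparsify.} I would apply \Cref{lem:sparsify-count-function} to $\widetilde f$ with $G=[n]$, the upper factor $1+\eps/3$, $\gamma=\eps/3$, and $\delta=\eps/3$. This yields $\widetilde g$, which is a $\bigl((1+\eps/3)^2,\,1-\eps/3\bigr)$-sum-approximation of $h$. Its support size is $\tcO(n/\eps)$, and it has the geometric-growth property of \Cref{item:sparsification-geometric-growth} with ratio $1+\eps/3$, so it is $(\eps/3)$-sparsified. The running time of this step is $\tcO(|\supp(\widetilde f)|+n/\eps)$, which is dominated by $O(n^3/\eps)$.

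\textbf{Step 3: arithmetic.} It remains to check the approximation factors. For $\eps<1/3$,
\[
(1+\eps/3)^2=1+2\eps/3+\eps^2/9\le 1+\eps,
\]
since $\eps^2/9\le\eps/3$. Also $1-\eps/3\ge 1-\eps$. Hence $\widetilde g$ is a $(1\pm\eps)$-sum-approximation of $h$, as required.
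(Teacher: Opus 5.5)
Your proposal is correct and follows the paper's argument (its footnote) exactly: run the simplified Halman algorithm of~\cite{GMW18} with parameter $\eps/3$, apply \Cref{lem:sparsify-count-function} with $\delta=\eps/3$, and check that $(1+\eps/3)^2\le 1+\eps$ and $1-\eps/3\ge 1-\eps$. One slip in your informal description of the black box, which the proof does not use: with per-step granularity about $\eps/(3n)$, the intermediate staircases have $\tcO(n^2/\eps)$ breakpoints, not $O(n/\eps)$, and this is what gives $n$ steps of cost $\tcO(n^2/\eps)$ each.
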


We view this algorithm as a modification of the standard dynamic program for \#Knapsack: define
\[
    d(i, t) := \bigl|\{I \subseteq [i] : w(I) \le t \}\bigr|.
\]
One can verify that $d$ satisfies the recurrence
\[
    d(i, t) = d(i - 1, t) + d(i - 1, t - w_i), \qquad \forall 1 \le i \le n, \ \forall 0 \le t \le
        T,
\]

Note that the array $d(i, \cdot)$ has size $T + 1$ entries, which is too large.
Hence, for $i=1,\ldots,n$, we maintain a $\delta$-sparsified approximation of the exact-weight
count function, whose prefix sums approximate $d(i,\cdot)$. The maintained array has size
$\widetilde O(i/\delta)$, with each sparsification step incurring a multiplicative
$(1+\delta)$ error in the prefix counts. Taking $\delta = (1 + \eps)^{1/n} - 1$ makes the error 
over $n$ iterations compound to $((1 + \eps)^{1/n})^n = 1 + \eps$, and ensures that
\[
    \# \text{stored breakpoints at step $i$} = \widetilde O\left(\frac{i}{\delta}\right) =
        \widetilde O\left(\frac{in}{\eps}\right).
\]
In particular, for $i \le n$, this is $\widetilde O(n^2/\eps)$. Thus, computing $d(i, \cdot)$ at
step $i$ takes $\widetilde O(n^2/\eps)$ time, yielding a total running time
$\widetilde O(n^3/\eps)$.

\begin{lemma}[Merging in~\cite{GMW18} (Lemmas 1 and 4)]
    \label{lem:full-convolution-approximation}
    Let $A, B \subseteq [n]$ with $A \cap B = \varnothing$, and let
    $\eps_A, \eps_B, \gamma_A, \gamma_B > 0$. Suppose $\widetilde f_A$ is a
    $((1 + \eps_A), (1 - \gamma_A))$-sum-approximation of the counting function $f_A$, and
    $\widetilde f_B$ is a $((1 + \eps_B), (1 - \gamma_B))$-sum-approximation of the counting
    function $f_B$. Then $\widetilde f_A \star \widetilde f_B$ is a
    $((1 + \eps_A)(1 + \eps_B), (1 - \gamma_A)(1 - \gamma_B))$-sum-approximation of the counting
    function $f_{A\cup B}$.
\end{lemma}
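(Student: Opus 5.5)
By \Cref{lem:sparsify-count-function}'s construction, this statement only needs to be proved at the level of prefix sums. The plan is to fix $x \ge 0$ and reduce everything to a single double-sum comparison.

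First, write the prefix sum of a convolution as a double sum:
\[
    (\widetilde f_A \star \widetilde f_B)^{\le}(x) = \sum_{a \ge 0} \widetilde f_A(a)\, \widetilde f_B^{\le}(x - a),
\]
where terms with $a > x$ vanish. Do the same for $f_{A \cup B} = f_A \star f_B$. The latter identity holds because $A \cap B = \varnothing$, so subsets of $A \cup B$ correspond bijectively to pairs (subset of $A$, subset of $B$) with additive weight.

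Second, replace one factor at a time using the sum-approximation bounds. Since $\widetilde f_A \ge 0$ pointwise, the hypothesis $\widetilde f_B^{\le} \le (1+\eps_B) f_B^{\le}$ gives
\[
    (\widetilde f_A \star \widetilde f_B)^{\le}(x) \le (1+\eps_B) \sum_a \widetilde f_A(a)\, f_B^{\le}(x-a) = (1+\eps_B)\,(\widetilde f_A \star f_B)^{\le}(x).
\]
Now rewrite $(\widetilde f_A \star f_B)^{\le}(x) = \sum_b f_B(b)\, \widetilde f_A^{\le}(x-b)$. Since $f_B \ge 0$, apply $\widetilde f_A^{\le} \le (1+\eps_A) f_A^{\le}$ to obtain $(1+\eps_A)(f_A \star f_B)^{\le}(x)$. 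The lower bound follows identically with the factors $(1-\gamma_B)$ and $(1-\gamma_A)$.

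The only point needing care is the swap between the two representations of the convolution prefix sum. This is a finite rearrangement: all functions are nonnegative and only indices in $[0, x]$ contribute. It is exactly what allows each approximation hypothesis to be applied to a prefix sum rather than pointwise. Nonnegativity of the integer-valued functions is essential here, so I would state it explicitly. I expect no real obstacle beyond this.
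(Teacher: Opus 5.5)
Your proof is correct and is the rigorous version of the paper's sketch. Disjointness gives $f_{A\cup B}=f_A\star f_B$, and writing $(\widetilde f_A\star\widetilde f_B)^{\le}(x)=\sum_a \widetilde f_A(a)\,\widetilde f_B^{\le}(x-a)$ lets you apply one prefix-sum bound at a time, so the error factors multiply. Two minor remarks: the reduction to prefix sums comes from \Cref{def:sum-approximation}, not \Cref{lem:sparsify-count-function}; and chaining the two lower bounds multiplies an inequality by $1-\gamma_B$, so you implicitly need $\gamma_B\le 1$. This is a defect of the statement rather than of your argument: for $\gamma_A=\gamma_B=2$ and $\widetilde f_A=\widetilde f_B=0$ the hypotheses hold but the conclusion fails. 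It is harmless in the paper, which only applies the lemma with $\gamma_A=\gamma_B=\varepsilon_e<1/3$.
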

Intuitively, the lemma is a direct product of a property of the count function:
\[
    f_{A\cup B}(t) = \sum_{x + y = t} f_A(x)f_B(y) = (f_A \star f_B)(t),
\]
which implies
\[
    f_{A\cup B}^{\le}(t) = \sum_{x + y \le t} f_A(x)f_B(y) = (f_A \star f_B)^{\le}(t).
\]
Thus, when we replace $f_A, f_B$ with $\widetilde f_A, \widetilde f_B$, the upper multiplicative
errors $(1 + \eps_A)$ and $(1 + \eps_B)$ multiply, and so do the lower factors $(1 - \gamma_A)$ and
$(1 - \gamma_B)$, giving the claimed approximation guarantee.

Putting the three lemmas together, after computing sum-approximations for the $p$ leaves,
\cite{GMW18} convolve the two child sum-approximations at each internal node and then
$\delta_{\ell}$-sparsify the resulting function, maintaining a $\delta_{\ell}$-sparsified
sum-approximation at every node on level $\ell$. Our algorithm follows the same template; the key
difference is that we use a different partition and perform convolution only over a small window
rather than over the full range.

\subsection{The Algorithm}
We use the partition-and-convolve framework of~\cite{GMW18}, together with our notion of balanced
subsets from \Cref{sec:balanced-subsets}, to compute a $(1 \pm \eps)$-sum-approximation $\widetilde h$ of
$h$ efficiently. First, if $\eps < n^{-3/2}$, we apply \Cref{lem:halman-leaf-algorithm} directly. Its running time
satisfies
\[
    \cO(n^3/\eps) \le O(n^{12/5}/\eps^{7/5}),
\]
as $n^{3/5}\eps^{2/5} < 1$. Thus, for the rest of this subsection, we assume $\eps \ge n^{-3/2}$.

We partition the item sets so that \Cref{lem:deterministic-balancing} applies. Fix a parameter $1 \le p \le n$
that is a power of two; we choose its value in the running-time analysis. Let
$L \coloneqq \log_2 p$. For $\ell = 0, \ldots, L$, our algorithm constructs a recursive partition
$\{G_1^{\ell}, G_2^{\ell}, \ldots, G_{2^{\ell}}^{\ell}\}$ of
$[n] = \bigsqcup_{k = 1}^{2^{\ell}} G_k^{\ell}$ as follows:

\begin{enumerate}
    \item Initialization ($\ell = 0$): $G_1^{0} \coloneqq [n]$.
    \item Inductive step ($1 \le \ell \le L$): For each $1 \le k \le 2^{\ell - 1}$, write
        $G_k^{\ell - 1} = \{i_1, \ldots, i_m \}$ in increasing order and partition it based on their
        parity as in \Cref{def:parity-partition}:
    \[
        G_{2k - 1}^{\ell} := \bigl(G_k^{\ell - 1}\bigr)_o = \{i_j \in G_k^{\ell - 1} : j \text{ is
            odd} \}
    \]
    and
    \[
        G_{2k}^{\ell} := \bigl(G_k^{\ell - 1}\bigr)_e = \{i_j \in G_k^{\ell - 1} : j \text{ is even}
            \}.
    \]
\end{enumerate}
This yields a full binary recursion tree of depth $L$ with $p$ leaves, where the $k$th node at level
$\ell$ corresponds to the item group $G_k^{\ell}$ and is partitioned as
\[
    G_k^{\ell} = G_{2k - 1}^{\ell + 1} \sqcup G_{2k}^{\ell + 1}, \qquad \forall 0 \le \ell \le L -
        1, \ 1 \le k \le 2^{\ell}.
\]
Moreover, $|G_k^{\ell}| \in \{\lfloor n/2^{\ell} \rfloor, \lceil n/2^{\ell} \rceil\}$, so in
particular $|G_k^{\ell}| = \Theta(n/2^{\ell})$.

For all $0 \le \ell \le L$ and $1 \le k \le 2^{\ell}$, let $f_k^{\ell}$ be the count function of
$G_k^{\ell}$ as in \Cref{def:count-function}. We set the sparsification parameter at recursion level
$\ell$ to
\[
    \delta_{\ell} := \frac{\eps}{2dp^{1/2}2^{\ell/2}},
    \qquad \text{where} \quad d := \frac{\sqrt{2}}{\sqrt{2} - 1}.
\]
Proceeding bottom-up along the recursion tree, we compute a $\delta_{\ell}$-sparsified
$(1 \pm \eps_{\ell})$-sum-approximation $\widetilde f_k^{\ell}$ of $f_k^{\ell}$,
except for the leaf level $\ell = L$, where we compute a $(\delta_L/3)$-sparsified $(1 \pm \delta_L)$-sum-approximation.
The error factors $(1 \pm \eps_{\ell})$ accumulate across the recursion levels $\ell$ (we will see that
$\eps_{\ell}$ depends on $\delta_{\ell}$ and $\eps_{\ell + 1}$ recursively).

To that end, the algorithm first computes
$(\delta_L/3)$-sparsified $(1\pm\delta_L)$-sum-approximations
$\widetilde f_k^L$ for all $1\le k\le 2^L$ using
\Cref{lem:halman-leaf-algorithm}; these correspond to the $2^L=p$
leaves $\{G_1^L,\ldots,G_p^L\}$.
Then, for $\ell = L - 1, \ldots, 0$ and $1 \le k \le 2^{\ell}$, we convolve
$\widetilde f_{2k - 1}^{\ell + 1}$ and $\widetilde f_{2k}^{\ell + 1}$, but only over the
\emph{useful} region of their supports that corresponds to the balanced feasible subsets with
respect to $G_k^{\ell}$. This is the novelty of our work: we save time in the convolution step by
focusing on balanced feasible subsets, while losing only a small fraction of the original feasible
subsets, as guaranteed by \Cref{lem:deterministic-balancing}. We then sparsify the result using
\Cref{lem:sparsify-count-function} to obtain a $\delta_{\ell}$-sparsified
$(1 \pm \eps_{\ell})$-sum-approximation, preventing a blow-up in convolution time at higher levels.

More precisely, our merging step can be formalized by the following lemma, which we prove in the
next subsection.
\newcommand{\epsm}{\varepsilon_{eo}}
\newcommand{\epsc}{\varepsilon_e}
\newcommand{\delm}{\delta_{eo}}
\newcommand{\delc}{\delta_e}
\begin{restatable}[Merging Lemma]{lemma}{merginglemma}
    \label{lem:restricted-merging}
    Let $G\subseteq[n]$ be a group of at least two items, let $f_G$ be the count function of $G$, and let
    $f_{e}$ and $f_{o}$ be the count functions of $G_e$ and $G_o$, respectively. Suppose we are
    given $\delc$-sparsified $(1 \pm \epsc)$-sum-approximations $\widetilde{f_{e}}$ and
    $\widetilde{f_{o}}$ of $f_{e}$ and $f_{o}$ for some $\epsc \in (0, 1/3)$ and $\delc \in (0, 1)$.
    Then, for any $\delm \in (0, 1/8)$, there is an algorithm that computes a $\delm$-sparsified
    $(1 \pm \epsm)$-sum-approximation $\widetilde f_G$ of $f_G$ in time
    \[
        \tcO\left(\log_{|G|} \left(\frac{1}{\delm}\right) \cdot \frac{|G|^{3/2}}{\delc^{2}} +
            \frac{|G|}{\delm}\right)
    \]
    where $\epsm$ satisfies
    \[
        (1 + \epsm) = (1 + \delm)(1 + \epsc)^{2}.
    \]
\end{restatable}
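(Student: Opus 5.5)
We will convolve $\widetilde f_e$ and $\widetilde f_o$ only over a set of \emph{useful pairs} $(t_1,t_2)\in\supp(\widetilde f_e)\times\supp(\widetilde f_o)$. Fix $m:=|G|$, choose the balancing parameter $c:=1+\log_{m}(16/\delm)$ (so that $8/m^{c-1}\le\delm/2$), and set $C:=\G$. Then $C=\tcO\bigl(\log_m(1/\delm)\sqrt{m}\bigr)$. Call $(t_1,t_2)$ useful if
\[
\widetilde f_e^{\le}(t_1)\le 6m^C\,\widetilde f_e^{\le}(t_2)
\quad\text{and}\quad
\widetilde f_o^{\le}(t_2)\le 6m^C\,\widetilde f_o^{\le}(t_1).
\]
We compute $g:=\sum_{\text{useful }(t_1,t_2)}\widetilde f_e(t_1)\widetilde f_o(t_2)\,[x=t_1+t_2]$. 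Then we apply \Cref{lem:sparsify-count-function} with parameter $\delm$. This last step costs $\tcO(|\supp g|+m/\delm)$ and yields the $\delm$-sparsified output with upper factor $(1+\delm)(1+\epsc)^2$.

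For the running time, fix $t_1$. By the geometric-growth property (\Cref{lem:sparsify-count-function}, item 3) of the $\delc$-sparsified $\widetilde f_e$, the number of $t_2\in\supp(\widetilde f_o)$ with $t_2\ge t_1$ and $\widetilde f_o^{\le}(t_2)\le 6m^C\widetilde f_o^{\le}(t_1)$ is at most $1+\log_{1+\delc}(6m^C)=\tcO(C/\delc)$. The case $t_2\le t_1$ is symmetric, using the growth of $\widetilde f_e$. Since $|\supp \widetilde f_e|=\tcO(m/\delc)$, there are $\tcO(mC/\delc^2)$ useful pairs, which gives the claimed bound. They can be enumerated by a two-pointer sweep over the sorted supports.

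For the upper bound, $g\le \widetilde f_e\star\widetilde f_o$ pointwise. So $g^{\le}\le(1+\epsc)^2 f_G^{\le}$ by \Cref{lem:full-convolution-approximation}.

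The main obstacle is the lower bound, $g^{\le}(x)\ge(1-\epsm)f_G^{\le}(x)$; call this \Cref{lem:omitted-convolution-mass}. It suffices to show that the omitted mass satisfies $(\widetilde f_e\star\widetilde f_o)^{\le}(x)-g^{\le}(x)\le(\delm/2)\,f_G^{\le}(x)$, or a similar small multiple. Since $(1-\epsc)^2-\delm/2\ge 1-\epsm$ (because $1+\epsm\ge 1+2\epsc+\delm$), this would finish the proof. I would attack it by charging each non-useful pair to non-balanced subsets.

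Take a non-useful pair with, say, $\widetilde f_e^{\le}(t_1)>6m^C\widetilde f_e^{\le}(t_2)$, which forces $t_1>t_2$. Consider the exact subsets $I\subseteq G$ with $w(I_e)\in(t_2', t_1]$ and $w(I_o)=t_2$ near the sparsified breakpoints. The sum-approximation sandwich gives $f_e^{\le}(t_1)\ge\tfrac{6}{(1+\epsc)}m^C f_e^{\le}(t_2)/(1+\epsc)\ge 2m^C f_e^{\le}(t_2)$. Every $C$-balanced $I$ has $I_e\setminus S\in\Omega_{G_e}[w(I_o)]$ for some $|S|\le C$. Therefore, among subsets $A\subseteq G_e$ with $w(A)\le t_1$, at most $m^C f_e^{\le}(t_2)$, which is at most half of them, can be balanced against an odd part of weight at most $t_2$.

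Summing this charging over all non-useful pairs with $t_1+t_2\le x$ should show the omitted mass is at most about $3$ times $|\Omega_G[x]\setminus\Gamma_G[x]|$, up to $(1+\epsc)$ factors. \Cref{lem:deterministic-balancing} then bounds it by $(24/m^{c-1})f_G^{\le}(x)\le\delm f_G^{\le}(x)$, after adjusting constants in $c$. The delicate part is making this charging rigorous. The sparsified functions place mass only at breakpoints, so I would compare prefix sums over dyadic-like blocks and use the ``6'' slack to absorb the two $(1\pm\epsc)$ distortions.
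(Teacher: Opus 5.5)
Your algorithm matches the paper: the useful-pair definition with the factor $6|G|^C$, the running-time count via geometric growth, the upper bound, and the reduction of the lower bound to an omitted-mass bound plus the Balancing Lemma. (For $t_2\ge t_1$ the count uses the growth of $\widetilde f_o$, not $\widetilde f_e$.) The gap is that the omitted-mass bound, which is the substance of the lemma (\Cref{lem:omitted-convolution-mass} in the paper), is only a heuristic, and the charging you sketch fails as stated.

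A $(1\pm\varepsilon_e)$-sum-approximation controls only prefix sums. The mass $\widetilde f_e(t_1)$ at one breakpoint says nothing about the exact count of $f_e$ on $(t_1',t_1]$; when $\varepsilon_e$ exceeds $\delta_e$ that window can even be empty. Likewise $\widetilde f_o(t_2)$ is unrelated to $f_o(t_2)$. Your bound $m^Cf_e^{\le}(t_2)\le\tfrac12 f_e^{\le}(t_1)$ concerns the whole prefix, so it says nothing about the fraction of balanced sets inside a window. If you instead charge $(t_1,t_2)$ to all even sets of weight at most $t_1$, every larger $t_1$ paired with $t_2$ charges the same sets again, so the ``about $3$'' overhead is unjustified.

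The missing idea is to arrange the excluded sum so that only prefix sums of the sparsified functions are compared with exact counts, and to collapse the whole $x$-range into one prefix before charging. Here is how the paper does it. Write $t$ for the capacity and take the pairs with $\widetilde f_e^{\le}(x)>6|G|^C\widetilde f_e^{\le}(y)$.

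\textbf{Step 1 (fix $x$).} The admissible $y\le t-x$ form a prefix $[0,y_x^*]$, because $\widetilde f_e^{\le}$ is nondecreasing. Hence $\sum_y\widetilde f_o(y)=\widetilde f_o^{\le}(y_x^*)\le(1+\varepsilon_e)\sum_{y\le y_x^*}f_o(y)$. The condition becomes $f_e^{\le}(x)>3|G|^Cf_e^{\le}(y)$, since $6(1-\varepsilon_e)/(1+\varepsilon_e)\ge 3$; this is the correct sandwich factor, not $6/(1+\varepsilon_e)^2$.

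\textbf{Step 2 (fix $y$).} Swap the sums. Every admissible $x$ has $x\le t-y$, so $f_e^{\le}(t-y)>3|G|^Cf_e^{\le}(y)$. Enlarge the $x$-range to all of $[0,t-y]$, which gives $\sum_{x\le t-y}\widetilde f_e(x)\le(1+\varepsilon_e)f_e^{\le}(t-y)$.

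\textbf{Step 3 (charge).} Let $D_e(t,y)$ be the sets $A\subseteq G_e$ with $w(A)\le t-y$ that still weigh more than $y$ after deleting their $C$ heaviest items. Your preimage count shows at most $|G|^Cf_e^{\le}(y)$ members of $\Omega_{G_e}[t-y]$ lie outside $D_e(t,y)$. Therefore $f_e^{\le}(t-y)<3|D_e(t,y)|$.

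\textbf{Step 4 (no double counting).} $\sum_y f_o(y)|D_e(t,y)|$ counts each $I\in\Omega_G[t]$ whose even side cannot be balanced exactly once, at $y=w(I_o)$. The symmetric odd-side term is disjoint from it. So the omitted mass is at most $3(1+\varepsilon_e)^2|\Omega_G[t]\setminus\Gamma_G[t]|\le 96|G|^{1-c}f_G^{\le}(t)$.

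Choose $c$ so that this is at most $\delta_{eo}f_G^{\le}(t)$. Since $(1-\varepsilon_e)^2-\delta_{eo}\ge 1-\varepsilon_{eo}$, your final inequality then completes the proof.
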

At the first merging level $\ell=L-1$, we apply
\Cref{lem:restricted-merging} with input sparsification parameter
$\delc=\delta_L/3$ and output sparsification parameter
$\delm=\delta_{L-1}$.
At every higher level $\ell<L-1$, we use
$\delc=\delta_{\ell+1}$ and $\delm=\delta_\ell$.

\paragraph{Correctness}
The input requirements of \Cref{lem:restricted-merging} are $\epsc \in (0, 1/3)$, $\delc \in (0, 1)$, and
$\delm \in (0, 1/8)$. To bound the accumulated errors $\eps_{\ell}$, we prove the following fact.
\begin{fact}
    \label{fact:recursion-error-bound}
    A recurrence
    \[
        (1 + \eps_{\ell})
        = (1 + \delta_{\ell})(1 + \eps_{\ell + 1})^{2}
        \qquad \text{for all } 0 \le \ell \le L - 1,
    \]
    with base case
    \[
        (1 + \eps_{L}) = (1 + \delta_{L}).
    \]
    satisfies $0<\eps_\ell<\eps<1/3$ for all
    $0\le\ell\le L$, and
    $\eps_{\ell+1}<\eps_\ell$ for all $0\le\ell<L$.
\end{fact}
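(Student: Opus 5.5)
We will bound the recurrence by unrolling it multiplicatively and comparing with a geometric series. Taking logarithms, set $a_\ell := \ln(1+\eps_\ell)$ and $b_\ell := \ln(1+\delta_\ell)$. The recurrence becomes $a_\ell = b_\ell + 2a_{\ell+1}$ with $a_L = b_L$. By induction downward from $L$,
\[
    a_\ell = \sum_{j=\ell}^{L} 2^{j-\ell} b_j .
\]
Positivity $\eps_\ell>0$ is immediate, since every $\delta_j>0$ gives $b_j>0$, hence $a_\ell>0$. Monotonicity $\eps_{\ell+1}<\eps_\ell$ follows directly from the recurrence: $a_\ell = b_\ell + 2a_{\ell+1} > a_{\ell+1}$ because $b_\ell>0$ and $a_{\ell+1}>0$.

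It remains to show $\eps_0<\eps$, since then $\eps_\ell\le\eps_0<\eps<1/3$ for all $\ell$ by monotonicity. We use $b_j\le\delta_j$ and substitute $\delta_j = \eps/(2d p^{1/2}2^{j/2})$:
\[
    a_0 \le \sum_{j=0}^{L} 2^{j}\delta_j = \frac{\eps}{2dp^{1/2}}\sum_{j=0}^{L} 2^{j/2}.
\]
This geometric sum is bounded by
\[
    \sum_{j=0}^{L}2^{j/2} = \frac{2^{(L+1)/2}-1}{\sqrt2-1} < \frac{\sqrt2\,p^{1/2}}{\sqrt2-1} = d\,p^{1/2},
\]
using $2^L=p$. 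This is exactly why $d=\sqrt2/(\sqrt2-1)$ was chosen. Hence $a_0<\eps/2$.

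Finally, $1+\eps_0 = e^{a_0} < e^{\eps/2}$. For $\eps\in(0,1/3)$ we have $e^{\eps/2}\le 1+\eps$: the function $e^{x/2}-1-x$ vanishes at $0$ and has derivative $\tfrac12 e^{x/2}-1<0$ on this range. Therefore $\eps_0<\eps$.

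The only real care point is getting the factor $2^{j-\ell}$ in the unrolled sum right, together with the geometric bound against $d\,p^{1/2}$; the rest is routine. The same computation, applied with $\ell$ in place of $0$, also gives the per-level bounds directly, should they be needed.
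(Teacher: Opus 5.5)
Your proof is correct and follows essentially the same route as the paper. You unroll the recurrence to $1+\eps_0=\prod_{\ell=0}^{L}(1+\delta_\ell)^{2^\ell}$ (your log-space sum), bound $\ln(1+\delta_\ell)\le\delta_\ell$, and use the geometric bound $\sum_{\ell\le L}2^{\ell/2}<d\,p^{1/2}$ together with $e^{\eps/2}\le 1+\eps$, with positivity and monotonicity read off directly from the recurrence.
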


\begin{proof}
    First, $0 < \eps_{L}$ is immediate since $\eps_{L} = \delta_{L} > 0$. Next, we show
    $\eps_{0} < \eps$. Solving the recurrence gives
    \[
        (1 + \eps_0) = \prod_{0 \le \ell \le L} (1 + \delta_{\ell})^{2^{\ell}}
        \le \prod_{0 \le \ell \le L} e^{\delta_{\ell}2^{\ell}}
        = e^{\sum_{\ell = 0}^{L}\delta_{\ell}2^{\ell}}
        < e^{\eps/2}
        \le (1 + \eps),
    \]
    where the first inequality uses $x \ge \ln(1 + x)$ for $1 + x > 0$, the last inequality uses
    $x/2 \le \ln(1 + x)$ for all $x \in (0, 1)$ and the strict inequality follows by the fact that,
    for all integers $x \ge 0$,
    \begin{align*}
        \sum_{\ell = 0}^{x}2^{\ell/2}
        < \frac{\sqrt{2}}{\sqrt{2} - 1}2^{x/2}
        = d2^{x/2}.
        \numberthis
        \label{eq:half-power-geometric-series}
    \end{align*}
    so,
    \[
        \sum_{0 \le \ell \le L}2^{\ell}\delta_{\ell} = \sum_{0 \le \ell \le L}
                \frac{\eps}{2dp^{1/2}}2^{\ell/2}
        = \frac{\eps}{2dp^{1/2}} \sum_{0 \le \ell \le L}2^{\ell/2}
        < \frac{\eps}{2dp^{1/2}} \cdot dp^{1/2}
        = \frac{\eps}{2}.
    \]
    Therefore, we have $\eps_0 < \eps$ as desired.

    It remains to show $\eps_{\ell + 1} < \eps_{\ell}$ for all $0 \le \ell < L$. We have
    $\delta_{\ell} > 0$. Hence, from the recurrence,
    \[
        1 + \eps_{\ell}
        = (1 + \delta_{\ell})(1 + \eps_{\ell + 1})^{2}
        > (1 + \eps_{\ell + 1})^{2}
        = 1 + 2\eps_{\ell + 1} + \eps_{\ell + 1}^{2},
    \]
    we have
    \[
        \eps_{\ell + 1}
        < 2\eps_{\ell + 1} + \eps_{\ell + 1}^{2}
        < \eps_{\ell},
    \]
    because $\eps_{\ell + 1} > 0$ inductively. This proves the monotonicity.
\end{proof}

The upper error factors $(1 + \eps_{\ell})$ produced by repeatedly applying \Cref{lem:restricted-merging}
satisfy the recurrence and base case given in \Cref{fact:recursion-error-bound}. Hence,
$\eps_{\ell} \in (0, 1/3)$ for all $\ell$. For the requirement $\delm \in (0, 1/8)$, we note that
$\delta_\ell = \eps/(2dp^{1/2}2^{\ell/2}) < 1/(6d) < 1/8$ since $\eps < 1/3$ and $6d > 8$.
Therefore, we can safely apply \Cref{lem:restricted-merging} at every level of the recursion, up to the root,
with the sparsification parameter $\delta_{\ell}$. Let $\widetilde f_1^0$ be the final
$(1\pm\eps_0)$-sum-approximation of the count
function $f_1^0$ (which is exactly $h$) at the root.
For $L\ge1$, this approximation is also
$\delta_0$-sparsified. Since $\eps_0 < \eps$, we have
\[
    (1 - \eps)|\Omega[t]|
    \le (1 - \eps_0)h^{\le}(t)
    \le \bigl(\widetilde f_1^0\bigr)^{\le}(t)
    \le (1 + \eps_0)h^{\le}(t)
    \le (1 + \eps)|\Omega[t]|,
    \qquad \forall t \ge 0,
\]
and the correctness of our algorithm follows.

\paragraph{Time Complexity}
We now analyze the running time of our algorithm. Each of the $p$ leaves has $\cO(n/p)$ items. By
\Cref{lem:halman-leaf-algorithm}, computing a $\delta_{L}/3$-sparsified $(1 \pm \delta_{L})$-sum-approximation for
each of the $p$ leaves takes
\[
    \cO\left(p \cdot \left(\frac{n}{p}\right)^{3} \bigg/ \delta_{L}\right)
    = \cO\left(\frac{n^{3}}{\eps p}\right),
\]
and each of these sum-approximations has size
\[
    \cO\left(\frac{n}{p} \cdot \frac{1}{\delta_{L}}\right)
    = \cO\left(\frac{n}{\eps}\right).
\]
For recursion levels $0 \le \ell \le L - 1$, fix an internal node $G_k^{\ell}$. By
\Cref{lem:restricted-merging}, the time for one merge at $G_k^{\ell}$ is
\[
    \tcO\Bigl(\log_{|G_k^{\ell}|}\left(\frac{1}{\delta_{\ell}}\right) \cdot
            \frac{|G_k^{\ell}|^{3/2}}{\delta_{\ell + 1}^{2}} +
            \frac{|G_k^{\ell}|}{\delta_{\ell}}\Bigr)
    \le \tcO\Bigl(\frac{|G_k^{\ell}|^{3/2}}{\delta_{\ell + 1}^{2}} +
            \frac{|G_k^{\ell}|}{\delta_{\ell}}\Bigr)
    \le \tcO\Bigl(\frac{|G_k^{\ell}|^{3/2}}{\delta_{\ell + 1}^{2}}\Bigr)
    \le \tcO\Bigl(\frac{n^{3/2}}{2^{3\ell/2} \delta_{\ell + 1}^{2}}\Bigr),
\]
where the inequalities follow by $\Omega (\eps/n) \le \delta_{\ell} < 1$ and
$\delta_{\ell} = \sqrt{2}\delta_{\ell + 1} \ge \delta_{\ell + 1}^2$ and
$|G_k^{\ell}| = \Theta(n/2^{\ell})$. There are $2^{\ell}$ internal nodes at level $\ell$, so summing
over all levels gives
\[
    \sum_{\ell = 0}^{L - 1} 2^{\ell} \cdot \tcO\Bigl(\frac{n^{3/2}}{2^{3\ell/2} \delta_{\ell +
            1}^{2}}\Bigr)
    = \tcO\left(\frac{n^{3/2}p}{\eps^{2}} \sum_{\ell = 0}^{L} 2^{\ell/2}\right)
    = \tcO\left(\frac{n^{3/2}p^{3/2}}{\eps^{2}}\right),
\]
where the last equality uses \Cref{eq:half-power-geometric-series}. Choose $p$ to be the largest power of two at most
$n^{3/5}\eps^{2/5}$. Since $n^{-3/2} \le \eps < 1$, we have $1 \le p \le n$ as required and
$p = \Theta(n^{3/5}\eps^{2/5})$. Thus, the total running time is
\[
    \tcO\left(
        \frac{n^3}{\eps p}
        + \frac{n^{3/2}p^{3/2}}{\eps^2}
    \right)
    = \tcO\left(\frac{n^{2.4}}{\eps^{1.4}}\right).
\]
This proves the following theorem.
\fptas*

\subsection{Merging Lemma}
In this subsection, we prove the Merging Lemma (\Cref{lem:restricted-merging}). To do so, we define
\emph{useful pairs}. These are pairs in the supports of the two input sum-approximations that
represent the contributions from each side of a \emph{balanced} subset at the parent node, and we
convolve only over such pairs. More specifically, fix $G \subseteq [n]$ with $|G| \ge 2$, and let $f_G$ be the count
function of $G$, and let $f_{e}$ and $f_{o}$ be the count functions of $G_e$ and $G_o$,
respectively. Suppose we have already computed $\delc$-sparsified $(1 \pm \epsc)$-sum-approximations
$\widetilde f_{e}$ and $\widetilde f_{o}$ for some $\epsc \in (0, 1/3)$ and $\delc \in (0, 1)$. Fix
a sparsification parameter $\delm \in (0, 1/8)$.

\newcommand{\cg}{%
    1 + \log_{|G|}\left(
        \frac{96}{(1 - \epsc)^2 \bigl(1 - \gamma/(1 - \epsc)^2\bigr)}
    \right)%
}
\newcommand{\CG}{\left\lceil 32c\sqrt{|G|\log |G|} + 2\right\rceil}
\newcommand{\s}{2 - (1 + \delm)(1 + \epsc)^2}

Recall \Cref{def:balanced-subsets}. We define the \emph{balancing} parameters by
\[
    c := \cg \qquad \text{where} \qquad \gamma := \s,
\]
and
\[
    C := \CG.
\]
We first show that $c$ is well-defined and derive bounds on $c$.

\begin{fact}
    \label{fact:balancing-parameter-range}
    The balancing parameter $c$ is well-defined and
    \[
        1 < c \le \cO\left(1 + \log_{|G|} \left(\frac{1}{\delm}\right)\right).
    \]
\end{fact}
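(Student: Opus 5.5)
We need to show that $\gamma$ and the argument of the logarithm are positive, that this argument exceeds $1$ (so $c>1$), and that $c$ is at most $\cO(1+\log_{|G|}(1/\delm))$. The plan is to compute $\gamma$ and the argument of the logarithm in terms of $\epsc$ and $\delm$, and then read off both bounds.

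\textbf{Well-definedness.} Write the argument of the logarithm as
\[
  A := \frac{96}{(1-\epsc)^2\bigl(1-\gamma/(1-\epsc)^2\bigr)} = \frac{96}{(1-\epsc)^2-\gamma}.
\]
Since $|G|\ge 2$, the base of $\log_{|G|}$ is valid. It then suffices to show $0<(1-\epsc)^2-\gamma$, and also $(1-\epsc)^2-\gamma\le 96$, which gives $A\ge 1$.

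Substitute $\gamma = 2-(1+\delm)(1+\epsc)^2$. Then
\[
  (1-\epsc)^2-\gamma = (1-\epsc)^2+(1+\delm)(1+\epsc)^2-2 = 2\epsc^2+\delm(1+\epsc)^2,
\]
using $(1-\epsc)^2+(1+\epsc)^2=2+2\epsc^2$. This quantity is strictly positive because $\delm>0$. It is also at most $2/9+(1/8)(16/9)<1$, since $\epsc<1/3$ and $\delm<1/8$. Hence $A>96>1$, so $\log_{|G|}A>0$ and $c>1$. This already gives the lower bound.

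\textbf{Upper bound.} From the identity,
\[
  A=\frac{96}{2\epsc^2+\delm(1+\epsc)^2}\le \frac{96}{\delm},
\]
where we drop the nonnegative $2\epsc^2$ and use $(1+\epsc)^2\ge 1$. Therefore
\[
  c\le 1+\log_{|G|}96+\log_{|G|}(1/\delm)\le 1+\log_2 96+\log_{|G|}(1/\delm) = \cO\bigl(1+\log_{|G|}(1/\delm)\bigr).
\]

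\textbf{Main obstacle.} The only nontrivial step is the algebraic simplification of $(1-\epsc)^2-\gamma$, in particular recognizing that the constant terms cancel. This cancellation is what makes the expression positive, rather than needing a separate sign argument on $\gamma$. Note that $\gamma$ itself may be negative; that is harmless, because only $(1-\epsc)^2-\gamma$ enters the formula for $c$.
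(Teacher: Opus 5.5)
Your proof is correct and is essentially the paper's argument: the identity $(1-\varepsilon_e)^2-\gamma = 2\varepsilon_e^2+\delta_{eo}(1+\varepsilon_e)^2$ shows the denominator is positive and bounded by a constant, which gives well-definedness and $c>1$; dropping $2\varepsilon_e^2$ then gives $A\le 96/\delta_{eo}$ and the upper bound. One small correction to your closing aside: under $\varepsilon_e<1/3$ and $\delta_{eo}<1/8$ one actually has $\gamma>0$ (the paper checks this in the proof of the Merging Lemma), although this plays no role in the argument.
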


\begin{proof}
    By expanding $\gamma = 2 - (1 + \delm)(1 + \epsc)^2$, we have
    \[
        1 - \frac{\gamma}{(1 - \epsc)^2}
        = \frac{(1 - \epsc)^2 - \gamma}{(1 - \epsc)^2}
        = \frac{(1 - \epsc)^2 - \bigl(2 - (1 + \delm)(1 + \epsc)^2\bigr)}{(1 - \epsc)^2}
        = \frac{\delm(1 + \epsc)^2 + 2\epsc^2}{(1 - \epsc)^2}.
    \]
    Hence,
    \[
        0 < (1 - \epsc)^2 \left(1 - \frac{\gamma}{(1 - \epsc)^2}\right) = \bigl(\delm(1 + \epsc)^2 +
            2\epsc^2\bigr) < 96
    \]
    where the inequalities follow from $\epsc \in (0, 1)$ and $\delm \in (0, 1)$. This implies that
    $c$ is well-defined and that $c > 1$. For the upper bound, we have
    \[
        \frac{96}{(1 - \epsc)^2 \bigl(1 - \gamma/(1 - \epsc)^2\bigr)} =
        \frac{96}{\bigl(\delm(1 + \epsc)^2 + 2\epsc^2\bigr)}
        \le \cO\left(\frac{1}{\delm}\right),
    \]
    since $\epsc > 0$. This upper bound implies $c \le \cO\bigl(1 + \log_{|G|}(1/\delm)\bigr)$.
\end{proof}

Next, we formally define \emph{useful pairs}, together with the convolution restricted to them.
\begin{definition}[Useful pairs]
    \label{def:useful-pairs}
    We say that $(x, y) \in \supp(\widetilde f_{e}) \times \supp(\widetilde f_{o})$ is a
    \emph{useful pair} if
    \begin{itemize}
        \item $\widetilde f_{e}^{\le}(x) \le 6|G|^{C} \widetilde f_{e}^{\le}(y)$; and
        \item $\widetilde f_{o}^{\le}(y) \le 6|G|^{C} \widetilde f_{o}^{\le}(x)$.
    \end{itemize}
    Let $U \subseteq \supp(\widetilde f_{e}) \times \supp(\widetilde f_{o})$ be the set of all
    useful pairs.
\end{definition}
In \Cref{def:useful-pairs}, we stress that we counter-intuitively use $\widetilde f_{e}^{\le}(y)$
and $\widetilde f_{o}^{\le}(x)$ instead of $\widetilde f_{o}^{\le}(y)$ and
$\widetilde f_{e}^{\le}(x)$ on the right-hand side of the inequalities, even though
$x \in \supp(\widetilde f_{e})$ and $y \in \supp(\widetilde f_{o})$.

\begin{definition}[Convolution on useful pairs]
    \label{def:convolution-on-useful-pairs}
    Define the convolution of $\widetilde f_e$ and $\widetilde f_o$ restricted to useful pairs by
    \[
        \widetilde f(t) := \sum_{\substack{x + y = t \\ (x, y) \in U}} \widetilde f_e(x)\widetilde
            f_o(y).
    \]
\end{definition}

With our choice of $c$, the number of useful pairs is much smaller than the total number of pairs in
the supports, smaller by roughly a factor of $|G|^{1/2}$ compared to the standard convolution.
Consequently, restricting the convolution to useful pairs improves the time complexity of computing
a sum-approximation of $f_G$.

\begin{lemma}
    \label{lem:compute-useful-pairs}
    The restricted convolution $\widetilde f$ in \Cref{def:convolution-on-useful-pairs} can be computed in
    time and space
    \[
        \tcO\left(\log_{|G|} \left(\frac{1}{\delm}\right) \cdot \frac{|G|^{3/2}}{\delc^{2}}\right).
    \]
\end{lemma}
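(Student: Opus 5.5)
We bound the number of useful pairs and then enumerate them directly. For each fixed $x \in \supp(\widetilde f_e)$ we count the $y \in \supp(\widetilde f_o)$ for which $(x,y)$ is useful, splitting into the cases $y \ge x$ and $y < x$.

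\textbf{Case $y \ge x$.} Monotonicity of prefix sums gives $\widetilde f_o^{\le}(x) \le \widetilde f_o^{\le}(y)$. The second condition of \Cref{def:useful-pairs} says $\widetilde f_o^{\le}(y) \le 6|G|^C \widetilde f_o^{\le}(x)$. Since $\widetilde f_o$ is $\delc$-sparsified, \Cref{item:sparsification-geometric-growth} of \Cref{lem:sparsify-count-function} says that consecutive support points $b_{i-1}<b_i$ satisfy $\widetilde f_o^{\le}(b_i)\ge(1+\delc)\widetilde f_o^{\le}(b_{i-1})$. Let $y_0$ be the smallest support point of $\widetilde f_o$ that is at least $x$; then $\widetilde f_o^{\le}(y_0) \ge \widetilde f_o^{\le}(x)$. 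Hence every qualifying $y$ lies in a run of consecutive support points, starting at $y_0$, along which $\widetilde f_o^{\le}$ grows by a factor of at most $6|G|^C$. The number of such $y$ is therefore at most
\[
1+\log_{1+\delc}(6|G|^C) = \cO\bigl(C\log|G|/\delc\bigr).
\]

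\textbf{Case $y < x$.} Symmetrically, use the first condition $\widetilde f_e^{\le}(x)\le 6|G|^C\widetilde f_e^{\le}(y)$ together with the geometric growth of $\widetilde f_e$. Fix $y$ and count the qualifying $x > y$: they form a run of consecutive support points of $\widetilde f_e$ after $y$, and there are at most $\cO(C\log|G|/\delc)$ of them. This case is easier to handle by iterating over $y$ rather than over $x$.

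\textbf{Total size.} Both supports have size $\tcO(|G|/\delc)$ by \Cref{item:sparsification-support-size}, so
\[
|U| = \tcO\bigl(|G|\cdot C/\delc^2\bigr).
\]
With $C=\CG$ and $c=\cO(1+\log_{|G|}(1/\delm))$ from \Cref{fact:balancing-parameter-range}, we have $C=\tcO(\log_{|G|}(1/\delm)\sqrt{|G|})$, which gives the claimed bound on $|U|$.

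\textbf{Algorithm.} Precompute the prefix sums $\widetilde f_e^{\le}$ and $\widetilde f_o^{\le}$ at all support points of both functions by merging the two sorted supports; this costs linear time in the support sizes. For each $x$, locate $y_0$ by binary search or a moving pointer, and scan forward through $\supp(\widetilde f_o)$. A point $y$ qualifies exactly when it satisfies both conditions. Since the second condition is monotone in $y$, we stop the scan as soon as it fails. Handle the case $y<x$ symmetrically by iterating over $y$ and scanning $x$. Each scanned pair either is useful or is the single terminating pair, so the work is $\cO(|U|+|\supp|)$. Accumulate the products $\widetilde f_e(x)\widetilde f_o(y)$ into a hash map keyed by $t=x+y$, or collect them in a list and sort, which costs an extra logarithmic factor. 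Finally output the sorted support of $\widetilde f$.

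\textbf{Main obstacle.} The delicate step is the counting argument. Prefix sums must be evaluated at points that are not in the function's own support, e.g.\ $\widetilde f_o^{\le}(x)$ for $x\in\supp(\widetilde f_e)$. We must also make sure that the geometric growth property bounds the length of each run, including the run's first element. The bound $\widetilde f_o^{\le}(y_0)\ge \widetilde f_o^{\le}(x)$ settles this. The additive $+1$ and the $\log|G|$ factor inside $C\log|G|$ are absorbed by the $\tcO$.
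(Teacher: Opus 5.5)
Your proposal is correct and follows essentially the paper's argument. You split the useful pairs by whether $y \ge x$ or $x > y$, scan forward from the first candidate in the opposite support, and use the $(1+\delta_e)$ geometric growth of the sparsified prefix sums to bound each scan by $\cO(\log_{1+\delta_e}(6|G|^C)) = \tcO(C/\delta_e)$, giving $\tcO(|G|C/\delta_e^2)$ pairs in total.

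One point you leave implicit, which the paper states explicitly: for $y \ge x$ the first condition $\widetilde f_e^{\le}(x) \le 6|G|^C \widetilde f_e^{\le}(y)$ holds automatically because $\widetilde f_e^{\le}$ is non-decreasing, and symmetrically for $x > y$. This is what justifies your claim that every scanned pair other than the terminating one is useful.
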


\begin{proof}
    Once $U$ is available, we can compute $\widetilde f$ by iterating over all $(x, y) \in U$ and
    adding $\widetilde f_{e}(x)\widetilde f_{o}(y)$ to the entry for $t = x + y$; this takes
    $\tcO(|U|)$ time and $\cO(|U|)$ space. Hence, it suffices to show that the time and space
    complexities of constructing the set $U$ are at most the stated time bound.

    We store the supports in sorted order, together with their
    prefix sums. For each fixed support element, we use binary
    search to locate the first candidate in the opposite support
    and scan forward until the prefix-sum condition fails. Consider the following procedure:
    \begin{enumerate}
        \item Initialize $S \gets \varnothing$.
        \item For each $x \in \supp(\widetilde f_{e})$, scan and include all
            $y \in \supp(\widetilde f_{o})$ such that $y \ge x$ and
            $\widetilde f_{o}^{\le}(y) \le 6|G|^{C} \widetilde f_{o}^{\le}(x)$.
        \item For each $y \in \supp(\widetilde f_{o})$, scan and include all
            $x \in \supp(\widetilde f_{e})$ such that $x \ge y$ and
            $\widetilde f_{e}^{\le}(x) \le 6|G|^{C} \widetilde f_{e}^{\le}(y)$
    \end{enumerate}

    We claim that $S$ is exactly the set of useful pairs in $U$. Fix $(x, y) \in S$. If $y \ge x$,
    the algorithm ensures the second condition of useful pairs:
    $\widetilde f_{o}^{\le}(y) \le 6|G|^{C} \widetilde f_{o}^{\le}(x)$ (\Cref{def:useful-pairs}).
    Moreover, since $\widetilde f_{e}^{\le}$ is non-decreasing, we have
    \[
        \widetilde f_{e}^{\le}(x) \le \widetilde f_{e}^{\le}(y) \le 6|G|^{C} \widetilde
            f_{e}^{\le}(y),
    \]
    so the first condition holds as well. Hence, $(x, y) \in U$. The case $x \ge y$ is similar, as
    the algorithm ensures that $\widetilde f_{e}^{\le}(x) \le 6|G|^{C} \widetilde f_{e}^{\le}(y)$,
    and monotonicity of $\widetilde f_{o}^{\le}$ implies the second condition of useful pairs.
    Conversely, fix $(x, y) \in U$. If $x \ge y$, the algorithm includes $(x, y)$ in $S$ in the
    third step, by the first condition of useful pairs. If $y \ge x$, the algorithm includes
    $(x, y)$ in $S$ in the second step, by the second condition of useful pairs.

    Next, we analyze the time and space complexity of constructing $U$. Let
    $\supp(\widetilde f_{e}) = \{b_1 < \cdots < b_{m_e} \}$ and
    $\supp(\widetilde f_{o}) = \{c_1 < \cdots < c_{m_o} \}$. Since $\widetilde f_{e}$ and
    $\widetilde f_{o}$ are $\delc$-sparsified, \Cref{lem:sparsify-count-function} implies that
    \begin{equation}
        \label{eq:even-side-support-size}
        m_e = \tcO\left(\frac{|G_e|}{\delc}\right) = \tcO\left(\frac{|G|}{\delc}\right)
    \end{equation}
    and
    \begin{equation}
        \label{eq:odd-side-geometric-growth}
        \widetilde f_o^{\le}(c_i) \ge (1 + \delc)\widetilde f_o^{\le}(c_{i - 1}) \quad \text{for all
            } 2 \le i \le m_o.
    \end{equation}
    Fix $x \in \supp(\widetilde f_e)$. Note that the number of $c_i \in \supp(\widetilde f_o)$ such
    that $c_i \ge x$ and $\widetilde f_o^{\le}(c_i) \le 6|G|^{C} \widetilde f_o^{\le}(x)$ is at most
    $\cO\bigl(\log_{1 + \delc}(6|G|^C)\bigr)$ by \Cref{eq:odd-side-geometric-growth}, and this is at most
    \[
        \tcO\bigl(\log_{(1 + \delc)}(6|G|^{C})\bigr)
        \le \tcO\left(\frac{C}{\delc}\right)
        \le \tcO\left(\frac{c\sqrt{|G|}}{\delc}\right)
        \le \tcO\left(\log_{|G|}\left(\frac{1}{\delm}\right) \cdot
                \frac{\sqrt{|G|}}{\delc}\right)
    \]
    by \Cref{fact:balancing-parameter-range} and $C = \CG$. Since $|\supp(\widetilde f_{e})| \le \tcO(|G|/\delc)$ by
    \Cref{eq:even-side-support-size}, the time and space complexities of step $2$ are at most
    \[
        \tcO\left(\frac{|G|}{\delc}\right) \cdot
                \tcO\left(\log_{|G|}\left(\frac{1}{\delm}\right) \cdot
                \frac{\sqrt{|G|}}{\delc}\right)
        = \tcO\left(\log_{|G|}\left(\frac{1}{\delm}\right) \cdot
                \frac{|G|^{3/2}}{\delc^{2}}\right).
    \]
    The time and space complexities of step $3$ can be bounded in the same way by a similar
    argument, and the proof follows.
\end{proof}

We now analyze how much $\widetilde f$ \emph{underestimates} the number of feasible subsets of $G$.
To this end, we bound the difference between the full-range convolution and the restricted
convolution,
\[
    \Delta(t) := \bigl(\widetilde f_e \star \widetilde f_o\bigr)^{\le}(t) - \widetilde f^{\le}(t),
\]
in terms of the prefix sum of the counting function $f_G$, which counts the number of feasible
subsets with respect to $G$ and $t$. The proof uses \Cref{lem:deterministic-balancing} to bound the contributions
of unbalanced subsets.
\begin{lemma}
    \label{lem:omitted-convolution-mass}
    \[
        0 \le \Delta(t) \le \frac{96}{|G|^{c - 1}} \cdot f_G^{\le}(t).
    \]
\end{lemma}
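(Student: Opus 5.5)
The plan is to write $\Delta(t)$ as a sum over non-useful pairs, show that each non-useful pair carries mass coming mostly from \emph{unbalanced} feasible subsets of $G$, and then invoke \Cref{lem:deterministic-balancing}. Nonnegativity is immediate: $\widetilde f$ is the full convolution with some nonnegative terms dropped, so
\[
\Delta(t)=\sum_{(x,y)\notin U,\ x+y\le t}\widetilde f_e(x)\widetilde f_o(y)\ge 0 .
\]
A pair is non-useful iff it violates the first condition ($\widetilde f_e^{\le}(x)>6|G|^C\widetilde f_e^{\le}(y)$, ``type e'') or the second ($\widetilde f_o^{\le}(y)>6|G|^C\widetilde f_o^{\le}(x)$, ``type o''). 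By the even/odd symmetry, I bound only the type-e mass $\Delta_e(t)$ and double it.

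\textbf{Step 1: reduce type-e mass to prefix sums.} Type-e pairs have $x>y$. For fixed $y$, let $x^*(y)$ be the largest support point with $\widetilde f_e^{\le}(x^*)\le 6|G|^C\widetilde f_e^{\le}(y)$. Then
\[
\Delta_e(t)=\sum_y \widetilde f_o(y)\Bigl(\widetilde f_e^{\le}(t-y)-\widetilde f_e^{\le}(x^*(y))\Bigr)^+ .
\]
I will use the sum-approximation guarantees $(1\pm\epsc)$ to replace $\widetilde f_e^{\le}$ and $\widetilde f_o$ by the exact $f_e^{\le}$ and $f_o$. For $f_o$, which appears pointwise, this goes through its prefix sums by summation by parts, since the bracket is monotone in $y$ in the needed direction. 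The result should be a bound of the form
\[
\Delta_e(t)\le \kappa\cdot\#\bigl\{(A,B): A\subseteq G_e,\ B\subseteq G_o,\ w(A)+w(B)\le t,\ f_e^{\le}(w(A))>3|G|^C f_e^{\le}(w(B))\bigr\},
\]
where $\kappa$ is a constant depending only on $(1\pm\epsc)$. The factor $6$ shrinks to $3$ after dividing by $(1+\epsc)/(1-\epsc)<2$.

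\textbf{Step 2: such pairs are mostly unbalanced.} Fix $B$ and set $y=w(B)$. Among all $A\subseteq G_e$ with $w(A)\le t-y$, consider those for which $A\cup B$ satisfies the first bullet of \Cref{def:balanced-subsets}, i.e.\ removing some $S\subseteq A$ with $|S|\le C$ gives $w(A\setminus S)\le y$. Each such $A$ is determined by the set $A\setminus S\in\Omega_{G_e}[y]$ together with $S$. Hence there are at most $\sum_{k\le C}\binom{|G|}{k}\,f_e^{\le}(y)\le |G|^C f_e^{\le}(y)$ of them. In the type-e region we have $f_e^{\le}(w(A))>3|G|^C f_e^{\le}(y)$, so for each $B$ at most a constant fraction (about $1/3$) of the counted $A$'s can yield a balanced $A\cup B$; the rest yield $A\cup B\in\Omega_G[t]\setminus\Gamma_G[t]$. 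To make this rigorous, I will not count pair-by-pair but compare totals for each fixed $B$: the number of $A$ above the threshold is a difference of prefix sums that is at least $\tfrac23 f_e^{\le}(t-y)$ whenever the threshold is crossed, while the balanced ones are at most $|G|^C f_e^{\le}(y)$. This gives
\[
\Delta_e(t)\le \kappa'\,\bigl|\Omega_G[t]\setminus\Gamma_G[t]\bigr| .
\]

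\textbf{Step 3: apply the Balancing Lemma.} By \Cref{lem:deterministic-balancing}, $|\Omega_G[t]\setminus\Gamma_G[t]|\le 8|G|^{-(c-1)}f_G^{\le}(t)$. Summing the type-e and type-o contributions then yields a bound of the shape $2\cdot\kappa'\cdot 8\,|G|^{-(c-1)}f_G^{\le}(t)$. I expect the bookkeeping to fit into $96=2\cdot 6\cdot 8$; the $\epsc$-dependent losses are absorbed later through the choice of $c$ and $\gamma$ in the Merging Lemma, not here.

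The main obstacle is Step 1 combined with Step 2: the usefulness test is phrased in terms of \emph{approximate} prefix sums evaluated at support points, whereas balancedness is a property of \emph{exact} subsets. Transferring the inequality is tricky because $\widetilde f_o(y)$ approximates neither $f_o(y)$ nor any pointwise count. My first attempt would be summation by parts in $y$, so that only prefix sums of $\widetilde f_o$ appear; these are then compared to $f_o^{\le}$ using the monotonicity of the bracket in $y$, with the condition $\delm<1/8$ guaranteeing the constants close up.
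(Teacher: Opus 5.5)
Your outline has the right ingredients, and several match the paper exactly: nonnegativity from dropped terms, splitting by which usefulness condition fails, replacing $\widetilde f_o$ by $f_o$ via monotonicity of the admissible $y$-range, the $6\to 3$ threshold transfer using $\varepsilon_e<1/3$, and the bound $|G|^C f_e^{\le}(y)$ on the number of sets that can be brought to weight at most $y$ by deleting at most $C$ items. The gap is in how Steps~1 and~2 fit together.

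You route everything through pairs $(A,B)$ satisfying the \emph{per-set} condition $f_e^{\le}(w(A))>3|G|^C f_e^{\le}(w(B))$, and this intermediate count is too small. For fixed $B$ with $w(B)=y$ it equals $f_e^{\le}(t-y)-f_e^{\le}(X)$, where $X$ is the last point with $f_e^{\le}(X)\le 3|G|^Cf_e^{\le}(y)$. So it is only guaranteed to be at least $f_e^{\le}(t-y)-3|G|^Cf_e^{\le}(y)$, which can be a tiny fraction of $f_e^{\le}(t-y)$ when $f_e^{\le}$ jumps just past the threshold at $t-y$. The type-e mass $\widetilde f_e^{\le}(t-y)-\widetilde f_e^{\le}(x^*(y))$, by contrast, can only be converted using the upper bound at $t-y$ and the lower bound at $x^*(y)$. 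In that situation it can remain of order $\varepsilon_e f_e^{\le}(t-y)$. The only thing that saves you is the gap between $6(1-\varepsilon_e)/(1+\varepsilon_e)$ and $3$, and the best $\kappa$ this conversion can give is of order $(1-3\varepsilon_e)^{-1}$, unbounded on $(0,1/3)$.

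Since $96$ is an absolute constant (it is hard-wired into the definition of $c$ in the Merging Lemma), this loss cannot be ``absorbed later'', and $\delta_{eo}<1/8$ plays no role here. Step~2 contains the matching false claims. It is not true that at most about a third of the counted $A$'s are balanced, nor that the number of $A$ above the threshold is at least $\tfrac23 f_e^{\le}(t-y)$: up to $|G|^Cf_e^{\le}(y)$ balanced sets may all lie above the threshold.

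The repair, which is the paper's argument, is never to restrict $A$ by a threshold. For each $y$, bound the entire inner type-e sum by $\widetilde f_e^{\le}(t-y)\le(1+\varepsilon_e)f_e^{\le}(t-y)$, using only the upper approximation bound. Keep just the indicator that the type-e range for $y$ is nonempty; that indicator implies $f_e^{\le}(t-y)>3|G|^Cf_e^{\le}(y)$. Then compare with \emph{all} $A\in\Omega_{G_e}[t-y]$. At most $|G|^Cf_e^{\le}(y)<\tfrac13 f_e^{\le}(t-y)$ of them satisfy the first balancing bullet with a $B$ of weight $y$. So more than $\tfrac23 f_e^{\le}(t-y)$ of them give sets $A\cup B\in\Omega_G[t]\setminus\Gamma_G[t]$ failing the first bullet. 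Summing against $f_o(y)$ yields $\Delta_e(t)\le\tfrac32(1+\varepsilon_e)^2$ times the number of such sets. Together with the symmetric bound for the other type (the two failure types are disjoint) and the Balancing Lemma, this gives a constant below $96$. The paper does the same with the cruder quantity $\alpha(t,y)$, obtaining the factor $3(1+\varepsilon_e)^2\le 12$ and hence $12\cdot 8=96$.
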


\begin{proof}
    Fix an arbitrary $t \ge 0$. For the lower bound, we observe that
    \[
        \begin{aligned}
            \Delta(t) &:= \bigl(\widetilde f_e \star \widetilde f_o\bigr)^{\le}(t) - \widetilde
                f^{\le}(t) \\
            &= \sum_{x + y \le t}\widetilde f_e(x)\widetilde f_o(y) - \sum_{\substack{x + y \le t \\
                (x, y) \in U}}\widetilde f_e(x)\widetilde f_o(y) \\
            &= \sum_{\substack{x + y \le t \\ (x, y) \notin U}} \widetilde f_e(x)\widetilde f_o(y)
                \\
            &\ge 0.
        \end{aligned}
    \]
    For the upper bound, we expand $\Delta(t)$ using the definition of useful pairs
    (\Cref{def:useful-pairs}):
    \begin{align*}
        \Delta(t)
        = \sum_{\substack{x + y \le t \\ (x, y) \notin U}} \widetilde f_e(x)\widetilde f_o(y)
        =
        \underbrace{\sum_{\substack{x + y \le t \\ \widetilde f_e^{\le}(x) > 6|G|^{C}\widetilde
            f_e^{\le}(y)}} \widetilde f_e(x)\widetilde f_o(y)}_{S_e}
        +
        \underbrace{
        \sum_{\substack{x + y \le t \\ \widetilde f_o^{\le}(y) > 6|G|^{C}\widetilde f_o^{\le}(x)}}
            \widetilde f_e(x)\widetilde f_o(y)}_{S_o}.
        \numberthis
        \label{eq:excluded-pair-decomposition}
    \end{align*}
    Indeed, the two failure conditions are disjoint: the first requires $x > y$, while the second
    requires $y > x$.

    To upper-bound $\Delta(t)$, we introduce some definitions. For any $z \ge 0$, let $\Omega_e[z]$
    denote the collection of feasible subsets with respect to $G_e$ and $z$, and let $\Omega_o[z]$
    denote the collection of feasible subsets with respect to $G_o$ and $z$. Define a map
    $\pi_{e} \colon \Omega_e[t] \to \Omega_e[t]$ that removes the heaviest $C$ items from
    $I \in \Omega_e[t]$ (if $I$ has fewer than $C$ items, remove all of them). Similarly, define a
    map $\pi_o \colon \Omega_o[t] \to \Omega_o[t]$ that does the same. These maps are well-defined,
    since removing items does not increase the weight. Lastly, we define
    \[
        D_e(t, y) := \bigl\{S \in \Omega_{e}[t] : y \le w(S) \le t - y \text{ and } \pi_e(S) \notin
            \Omega_{e}[y]\bigr\}
    \]
    and
    \[
        D_o(t, x) := \bigl\{S \in \Omega_{o}[t] : x \le w(S) \le t - x \text{ and } \pi_o(S) \notin
            \Omega_{o}[x]\bigr\}.
    \]
    Intuitively, $D_e(t, y)$ is the collection of subsets $S \in \Omega_{e}[t - y]$ such that, even
    after removing the heaviest $C$ items from $S$, the resulting set still has weight larger than
    $y$; equivalently, $S$ cannot form a balanced feasible subset of $G$ together with a subset in
    $G_{o}$ of weight $y$. Hence, by \Cref{def:count-function} and \Cref{def:balanced-subsets}, we have

    \begin{align*}
        |\Omega_G[t] \setminus \Gamma_G[t]|
        = \left(\sum_{y \le t} f_{o}(y) \cdot |D_e(t, y)|\right)
        + \left(\sum_{x \le t} f_{e}(x) \cdot |D_o(t, x)|\right).
        \numberthis
        \label{eq:unbalanced-subset-counts}
    \end{align*}

    Now, we bound $S_e$ in \Cref{eq:excluded-pair-decomposition} in terms of $D_e(t, y)$ and $f_o$.
    \begin{claim}
        \label{claim:excluded-pair-bound}
        \[
            S_e \le 3(1 + \epsc)^2\sum_{y \le t} f_o(y) \cdot |D_e(t, y)|.
        \]
        \[
            S_o \le 3(1 + \epsc)^2\sum_{x \le t} f_e(x) \cdot |D_o(t, x)|.
        \]
    \end{claim}
    \begin{proof}
        We only prove the bound for $S_e$, as the bound for $S_o$ follows analogously. Rewrite $S_e$
        as
        \[
                S_e := \sum_{\substack{x + y \le t \\ \widetilde f_e^{\le}(x) > 6|G|^{C}\widetilde
                    f_e^{\le}(y)}} \widetilde f_e(x)\widetilde f_o(y)
                = \sum_{x \le t} \widetilde f_e(x) \sum_{\substack{y \le t - x \\ \widetilde
                    f_e^{\le}(x) > 6|G|^{C}\widetilde f_e^{\le}(y)}} \widetilde f_o(y)
        \]
        For each fixed $x$, the allowed values of $y$ form a prefix interval: both $y \le t - x$ and
        the upper bound on $\widetilde f_e^{\le}(y)$ are preserved when $y$ decreases. Therefore,
        the upper prefix bound for the $(1 \pm \epsc)$-sum-approximation $\widetilde f_o$ gives

        \begin{align*}
            S_e = \sum_{x \le t} \widetilde f_e(x) \sum_{\substack{y \le t - x \\ \widetilde
                f_e^{\le}(x) > 6|G|^{C}\widetilde f_e^{\le}(y)}} \widetilde f_o(y) & \le (1 + \epsc)
                \sum_{x \le t} \widetilde f_e(x) \sum_{\substack{y \le t - x \\ \widetilde
                f_e^{\le}(x) > 6|G|^{C}\widetilde f_e^{\le}(y)}} f_o(y) \\
            & = (1 + \epsc)\sum_{\substack{x + y \le t \\ \widetilde f_e^{\le}(x) >
                6|G|^{C}\widetilde f_e^{\le}(y)}} f_o(y) \cdot \widetilde f_e(x)
        \end{align*}
        The summation condition $\widetilde f_e^{\le}(x) > 6|G|^{C} \widetilde f_e^{\le}(y)$ implies
        \[
            \frac{4}{3}f_e^{\le}(x)
            > (1 + \epsc) f_e^{\le}(x)
            \ge \widetilde f_e^{\le}(x)
            > 6|G|^{C} \widetilde f_e^{\le}(y)
            \ge (1 - \epsc)6|G|^{C} f_e^{\le}(y)
            \ge \frac{12}{3}|G|^{C} f_e^{\le}(y)
        \]
        by \Cref{def:sum-approximation} and the fact that we assumed $\epsc \le 1/3$. Hence, we have
        \begin{align*}
            S_e & \le (1 + \epsc)\sum_{\substack{x + y \le t \\ \widetilde f_e^{\le}(x) >
                6|G|^{C}\widetilde f_e^{\le}(y)}} f_o(y) \cdot \widetilde f_e(x) \le (1 +
                \epsc)\sum_{\substack{x + y \le t \\ f_e^{\le}(x) > 3|G|^{C} f_e^{\le}(y)}} f_o(y)
                \cdot \widetilde f_e(x)
        \end{align*}
        We further bound $S_e$ in terms of $f_e^{\le}$ as follows:
        \begin{align*}
            S_e \le (1 + \epsc)\sum_{\substack{x + y \le t \\ f_e^{\le}(x) > 3|G|^{C} f_e^{\le}(y)}}
                f_o(y) \cdot \widetilde f_e(x)
            & \le (1 + \epsc)\sum_{\substack{y \le t \\ f_e^{\le}(t - y) > 3|G|^C
                f_e^{\le}(y)}}f_o(y) \sum_{x \le t - y} \widetilde f_e(x) \\
            & \le (1 + \epsc)^2\sum_{\substack{y \le t \\ f_e^{\le}(t - y) > 3|G|^C f_e^{\le}(y)}}
                f_o(y) \cdot f_e^{\le}(t - y).
        \end{align*}
        The summation condition $f_e^{\le}(t - y) > 3|G|^C f_e^{\le}(y)$ implies
        \[
            \frac{1}{3}f_e^{\le}(t - y) > |G|^C f_e^{\le}(y) \ge f_e^{\le}(y).
        \]
        Hence, if we define
        \[
            \alpha(t, y) := \max\bigl\{0, f_e^{\le}(t - y) - (1 + |G|^{C})f_e^{\le}(y)\bigr\},
        \]
        then we have
        \[
            f_e^{\le}(t - y)
            <
            \max\Bigl\{0, 3\bigl(f_e^{\le}(t - y) - f_e^{\le}(y) - |G|^{C} f_e^{\le}(y)\bigr)\Bigr\}
            = 3\alpha(t, y).
        \]
        This yields the further bound
        \begin{align*}
            S_e
            \le (1 + \epsc)^2\sum_{\substack{y \le t \\ f_e^{\le}(t - y) > 3|G|^Cf_e^{\le}(y)}}
                f_o(y) \cdot f_e^{\le}(t - y)
            & \le 3(1 + \epsc)^2\sum_{\substack{y \le t \\ f_e^{\le}(t - y) > 3|G|^C f_e^{\le}(y)}}
                f_o(y) \cdot \alpha(t, y) \\
            & \le 3(1 + \epsc)^2\sum_{y \le t} f_o(y) \cdot \alpha(t, y).
        \end{align*}
        Now, recall
        \[
            D_e(t, y) := \bigl\{S \in \Omega_{e}[t] : y \le w(S) \le t - y \text{ and } \pi_e(S)
                \notin \Omega_{e}[y]\bigr\}.
        \]
        We note that every $I\in\Omega_e[y]$ has at most
        \[
            \sum_{j=0}^{\min\{C,|G_e|\}}
            \binom{|G_e|}{j}
            \le (|G_e|+1)^C
            \le |G|^C
        \]
        preimages under $\pi_e$, corresponding to the possible
        choices of at most $C$ removed items.
        Consequently,
        \[
            |(\pi_e)^{-1}(\Omega_e[y])|
            \le |G|^C f_e^{\le}(y)
        \]
        which implies that
        \begin{align*}
            |D_e(t, y)|
            & \ge \max\bigl\{0, |\Omega_{e}[t - y]| - |\Omega_{e}[y - 1]| -
                |(\pi_e)^{-1}(\Omega_{e}[y])|\bigr\} \\
            & \ge \max\bigl\{0, f_{e}^{\le}(t - y) - f_{e}^{\le}(y - 1) - |G|^{C}
                f_{e}^{\le}(y)\bigr\} \\
            & \ge \max\bigl\{0, f_{e}^{\le}(t - y) - f_{e}^{\le}(y) - |G|^{C} f_{e}^{\le}(y)\bigr\}
                \\
            & = \alpha(t, y).
        \end{align*}
        Therefore, we obtain the desired bound:
        \[
            S_e \le 3(1 + \epsc)^2\sum_{y \le t} f_o(y) \cdot \alpha(t, y) \le 3(1 + \epsc)^2\sum_{y
                \le t} f_o(y) \cdot |D_e(t, y)|.
        \]
    \end{proof}

    Combining \Cref{eq:excluded-pair-decomposition}, \Cref{claim:excluded-pair-bound}, and \Cref{eq:unbalanced-subset-counts}, and using the fact that
    $\epsc < 1$, we obtain an upper bound on $\Delta(t)$:
    \begin{align*}
        \Delta(t)
        = S_e + S_o
        \le 3(1 + \epsc)^2
        \left(
        \sum_{y \le t} f_o(y)|D_e(t, y)|
        +
        \sum_{x \le t} f_e(x)|D_o(t, x)|
        \right)
        & = 3(1 + \epsc)^2 |\Omega_G[t] \setminus \Gamma_G[t]| \\
        & \le 12 \cdot |\Omega_G[t] \setminus \Gamma_G[t]|
        \numberthis
        \label{eq:restricted-convolution-error-sum}
    \end{align*}
    By \Cref{fact:balancing-parameter-range}, we have $c > 1$, so \Cref{lem:deterministic-balancing} applies with our choice of $C$
    and gives
    \[
        |\Omega_{G}[t] \setminus \Gamma_G[t]| \le \frac{8}{|G|^{c - 1}} |\Omega_{G}[t]| \le
            \frac{8}{|G|^{c - 1}} f_G^{\le}(t) .
    \]
    Therefore, combining this with \Cref{eq:restricted-convolution-error-sum}, we conclude the desired inequality:
    \begin{align*}
        \Delta(t)
        & \le 12 \cdot |\Omega_G[t] \setminus \Gamma_G[t]| \le \frac{96}{|G|^{c - 1}} \cdot
            f_G^{\le}(t).
    \end{align*}
\end{proof}

It remains to combine our results and prove the Merging Lemma (\Cref{lem:restricted-merging}), which formalizes
the merging step of our algorithm for computing a $\delta_{\ell}$-sparsified
$(1 + \eps_{\ell})$-sum-approximation to the count function $f_{k}^{\ell}$.
\merginglemma*

\begin{proof}
    Let $\widetilde f$ be the convolution of $\widetilde f_e$ and $\widetilde f_o$ restricted to
    $U$, as in \Cref{def:convolution-on-useful-pairs}. By \Cref{lem:compute-useful-pairs}, the space and
    time complexities of computing $\widetilde f$ are at most
    \[
        \tcO\left(\log_{|G|}\left(\frac{1}{\delm}\right) \cdot \frac{|G|^{3/2}}{\delc^{2}}\right).
    \]
    We then apply $\delm$-sparsification to $\widetilde f$. By \Cref{lem:sparsify-count-function},
    $\delm$-sparsification takes
    \[
        \tcO\left(\log_{|G|}\left(\frac{1}{\delm}\right) \cdot \frac{|G|^{3/2}}{\delc^{2}} +
            \frac{|G|}{\delm}\right)
    \]
    time, and we obtain the desired time complexity.

    Observe that $\delm$-sparsification introduces a multiplicative approximation factor of
    $(1 + \delm)$. Recall $\gamma := \s$. Clearly, by the input premises, we have
    \[
        \gamma = \s > 2 - (9/8)(16/9) = 0
    \]
    To prove correctness, it suffices to show that, \emph{before} $\delm$-sparsification,
    $\widetilde f$ is a $((1 + \epsc)^2, \gamma)$-sum-approximation of $f_G$. Then, after
    $\delm$-sparsification, \Cref{lem:sparsify-count-function} implies that its prefix sum satisfies
    \[
        (\s) f_G^{\le}(t) = \gamma f_G^{\le}(t)
        \le
        \widetilde f^{\le}(t)
        \le
        (1 + \delm)(1 + \epsc)^2f_G^{\le}(t),
    \]
    and hence it satisfies
    \[
        (1 + \epsm) = (1 + \delm)(1 + \epsc)^2
        \quad \text{and} \quad
        (1 - \epsm) = \s = \gamma.
    \]
    To show that $\widetilde f$ is a $((1 + \epsc)^2, \gamma)$-sum-approximation of $f_G$, we note
    that the \emph{full-range} convolution of the two sum-approximations $\widetilde f_e$ and
    $\widetilde f_o$ satisfies
    \begin{equation}
        \label{eq:full-convolution-error-bounds}
        (1 - \epsc)^2 f_G^{\le}(t)
        \le
        \bigl(\widetilde f_e \star \widetilde f_o\bigr)^{\le}(t)
        \le
        (1 + \epsc)^2 f_G^{\le}(t)
    \end{equation}
    by \Cref{lem:full-convolution-approximation}. On the other hand, we have
    \begin{equation}
        \label{eq:omitted-convolution-error-bound}
        0 \le \bigl(\widetilde f_e \star \widetilde f_o\bigr)^{\le}(t) - \widetilde f^{\le}(t)
        \le
        \frac{96}{|G|^{c - 1}} f_G^{\le}(t)
    \end{equation}
    by \Cref{lem:omitted-convolution-mass}. Combining \eqref{eq:full-convolution-error-bounds} and \eqref{eq:omitted-convolution-error-bound}, we
    obtain the desired lower bound on $\widetilde f$:
    \begin{align*}
        \widetilde f^{\le}(t)
        & \ge \bigl(\widetilde f_e \star \widetilde f_o\bigr)^{\le}(t)
        - \frac{96}{|G|^{c - 1}} f_G^{\le}(t)
        \ge (1 - \epsc)^2 f_G^{\le}(t)
        - \frac{96}{|G|^{c - 1}} f_G^{\le}(t)
        = \gamma f_G^{\le}(t)
    \end{align*}
    since
    \[
        c = \cg
    \]
    For the upper bound, we have
    \[
        \widetilde f^{\le}(t)
        \le
        \bigl(\widetilde f_e \star \widetilde f_o\bigr)^{\le}(t)
        \le
        (1 + \epsc)^2 f_G^{\le}(t),
    \]
    again by \Cref{lem:full-convolution-approximation} and \Cref{eq:full-convolution-error-bounds}. Therefore, we conclude that
    $\widetilde f$ is a $((1 + \epsc)^2, \gamma)$-sum-approximation of $f_G$ before
    $\delm$-sparsification, as claimed.
\end{proof}

\newpage
\bibliographystyle{alpha}
\bibliography{ref}
\end{document}